\newcounter{num}
\newtheorem{theo}{Theorem}
\newtheorem{defi}{Definition}
\newtheorem{prop}{Proposition}
\newtheorem{example}{Example}
\newtheorem{remark}{Remark}
\algrenewcommand\algorithmicrequire{\textbf{Input:}}
\algrenewcommand\algorithmicensure{\textbf{Output:}}
\algnewcommand{\Break}{\textbf{break}}
\algnewcommand\algorithmicforeach{\textbf{for each}}
\newcommand*{\itemequation}[3][]{%
  \item
  \begingroup
    \refstepcounter{equation}%
    \ifx\\#1\\%
    \else  
      \label{#1}%
    \fi
    \sbox0{#2}%
    \sbox2{$\displaystyle#3\m@th$}%
    \sbox4{\@eqnnum}%
    \dimen@=.5\dimexpr\linewidth-\wd2\relax
    % Warning for overlapping
    \ifcase
        \ifdim\wd0>\dimen@
          \z@
        \else
          \ifdim\wd4>\dimen@
            \z@
          \else 
            \@ne
          \fi 
        \fi
      \@latex@warning{Equation is too large}%
    \fi
    \noindent   
    \rlap{\copy0}%
    \rlap{\hbox to \linewidth{\hfill\copy2\hfill}}%
    \hbox to \linewidth{\hfill\copy4}%
    \hspace{0pt}% allow linebreak
  \endgroup
  \ignorespaces 
}
\def\@citex[#1]#2{\leavevmode
\let\@citea\@empty
\@cite{\@for\@citeb:=#2\do
{\@citea\def\@citea{,\penalty\@m\ }%
\edef\@citeb{\expandafter\@firstofone\@citeb\@empty}%
\if@filesw\immediate\write\@auxout{\string\citation{\@citeb}}\fi
\@ifundefined{b@\@citeb}{\hbox{\reset@font\bfseries ?}%
\G@refundefinedtrue
\@latex@warning
{Citation `\@citeb' on page \thepage \space undefined}}%
{\@cite@ofmt{\csname b@\@citeb\endcsname}}}}{#1}}
\title{\LARGE \bf
Signal Temporal Logic Meets Convex-Concave Programming: \\ A Structure-Exploiting SQP Algorithm for STL Specifications}
\author{Yoshinari~Takayama$^{1}$, Kazumune Hashimoto$^{2}$, and Toshiyuki Ohtsuka$^{3}$% <-this % stops a space
\thanks{}% <-this % stops a space
\thanks{$^{1}$Y. Takayama is with Laboratoire des Signaux et Systèmes, Université Paris-Saclay, CNRS, CentraleSupélec, Gif-sur-Yvette, France.}
\thanks{\tt\small yoshinari.takayama@l2s.centralesupelec.fr} 
\thanks{$^{2}$K. Hashimoto is with the Graduate School of Engineering, Osaka University,
Suita, Japan. \tt\small{hashimoto@eei.eng.osaka-u.ac.jp}}
\thanks{$^{3}$T. Ohtsuka are with the Department of Systems Science, Graduate School of Informatics, Kyoto University, Kyoto, Japan. \tt\small{ohtsuka@i.kyoto-u.ac.jp}}
\thanks{This work was partially supported by JSPS KAKENHI under Grant JP22H01510 and 21K14184 and by JST CREST under Grant JPMJCR201. }
}
\begin{document}
\IEEEaftertitletext{\vspace{-1\baselineskip}}

\maketitle
%\thispagestyle{empty}
%\pagestyle{empty}

%%%%%%%%%%%%%%%%%%%%%%%%%%%%%%%%%%%%%%%%%%%%%%%%%%%%%%%%%%%%%%%%%%%%%%%%%%%%%%%%
\begin{abstract}
This study considers the control problem with signal temporal logic (STL) specifications. 
Prior works have adopted smoothing techniques to address this problem within a feasible time frame and solve the problem by applying sequential quadratic programming (SQP) methods naively. However, one of the drawbacks of this approach is that solutions can easily become trapped in local minima that do not satisfy the specification.
In this study, we propose a new optimization method, termed \textit{CCP-based SQP}, based on the convex-concave procedure (CCP). Our framework includes a new \textit{robustness decomposition} method that decomposes the robustness function into a set of constraints, resulting in a form of difference of convex (DC) program that can be solved efficiently. We solve this DC program sequentially as a quadratic program by only approximating the disjunctive parts of the specifications. Our experimental results demonstrate that our method has a superior performance compared to the state-of-the-art SQP methods in terms of both robustness and computational time. 

\end{abstract}

\section{Introduction}
As technology becomes more and more powerful, autonomous systems––such as drones and self-driving cars––are required to execute increasingly complex control tasks.
Signal temporal logic (STL) has been progressively used as a specification language for these complex robotic tasks in various situations, due to their expressiveness and closeness to natural language \cite{Fainekos2009-yn}. However, the resulting optimization problem is a mixed integer program (MIP) \cite{karaman2008,Raman2014-sa}, which can easily be computationally intractable and scales poorly with the number of integer variables. To avoid this complexity, recent work formulates the problem as nonlinear programs (NLP) and solve it by sequential quadratic programming (SQP) methods \cite{Pant_smooth,Gilpin2021-wv}. Although these methods can find a locally optimal solution in a short time compared to the MIP-based methods in general, the major difficulty of this formulation is that the solution can get stuck in a local minimum far from the global optimum as this method approximates the highly non-convex programs to quadratic programs. Moreover, the popular formulation makes this method more inefficient because all nonconvex functions are compressed into the objective function––which results in a coarse approximation by the naive SQP method.

In this study, we address the aforementioned challenges by leveraging the structure of STL control problems. We introduce a novel optimization framework, termed the \textit{convex-concave procedure based sequential quadratic programming (CCP-based SQP)}, which is a variant of the SQP method that solves quadratic programs sequentially. However, unlike traditional SQP methods, this framework approximates the original program at each iteration differently. It is based on the iterative optimization approach of CCP~\cite{Lipp2016-fa,Shen2016-sn}. Stated differently, only the concave portion is linearized at each iteration, allowing the method to retain the complete information of the convex parts. This is in contrast to naive SQP, which can only retain second-order derivative information at each iteration. As a result, our approach was able to synthesize a trajectory that is not only more robust (in terms of robustness score) but is also produced in a shorter time than the naive SQP methods.

\begin{figure*}[t]

 \begin{minipage}[b]{0.25\linewidth}
    \centering
\includegraphics[keepaspectratio, scale=0.29]{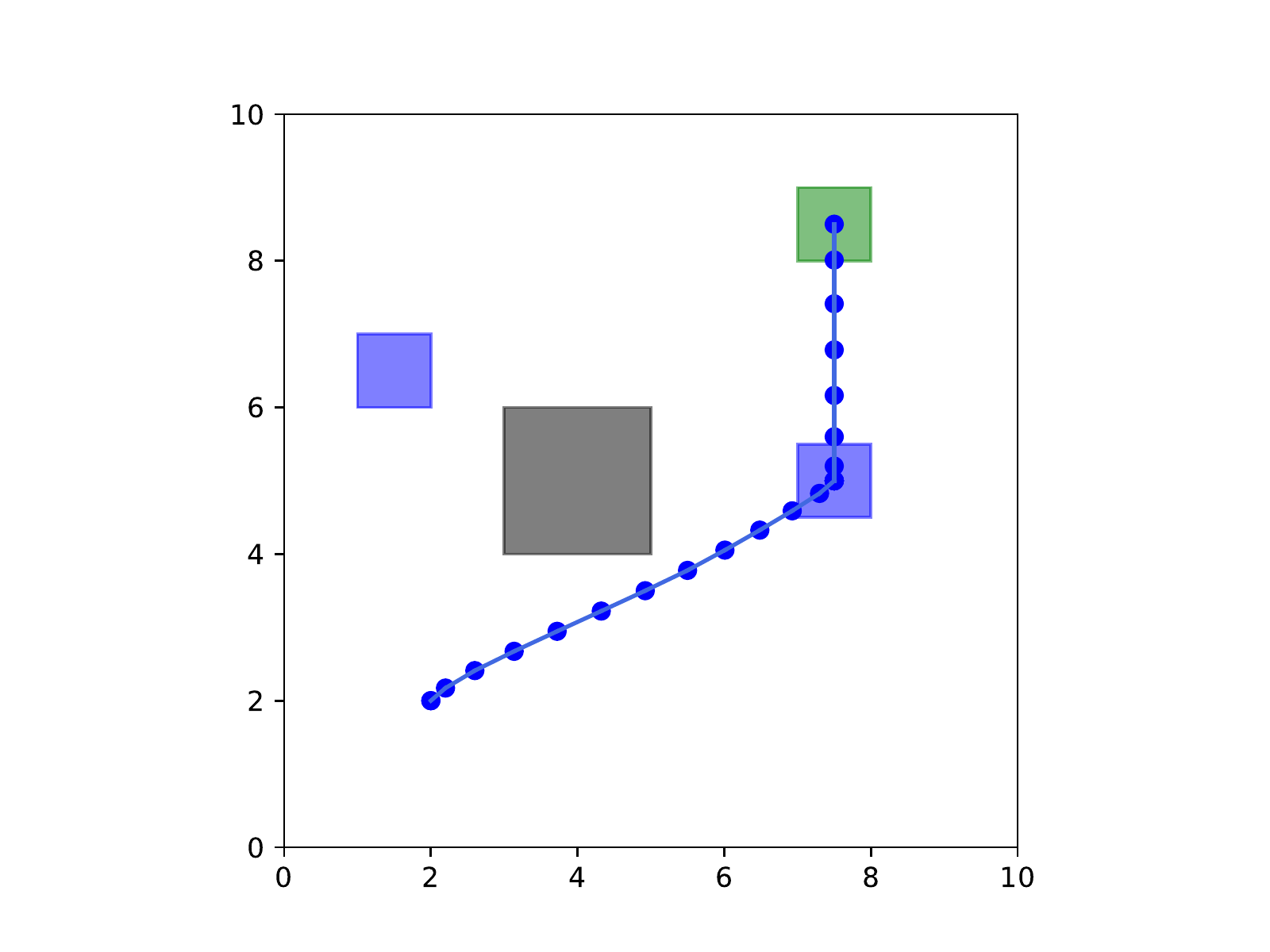}
    \caption{Two-target scenario}\label{fig:either-MIP}
 \end{minipage}
 %\hspace{0.02\columnwidth}
  \begin{minipage}[b]{0.75\linewidth}
    \includegraphics[keepaspectratio, scale=0.7]{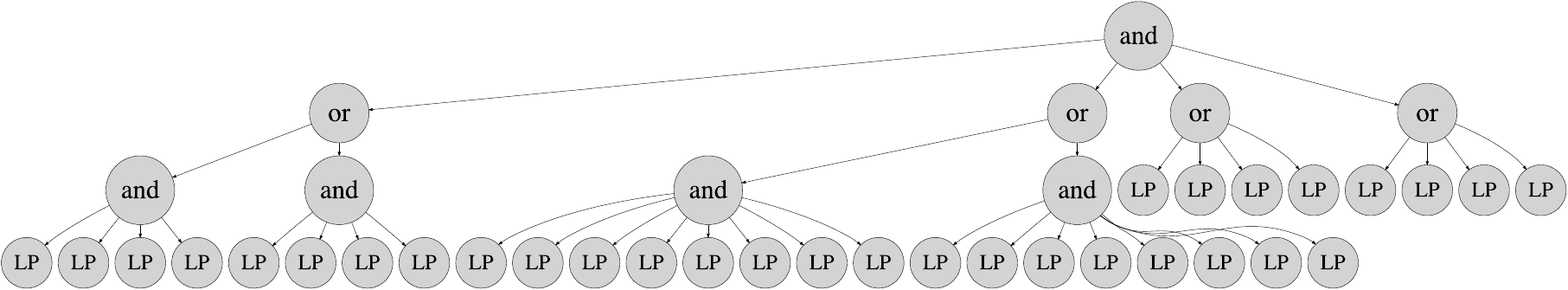}
    \caption{A tree description of the two-target formula with $T=1$ and $T_d=1$ }\label{fig:tree_two_target}
     %\vspace{0.05\columnwidth}
 \end{minipage}
  \end{figure*}

The remainder of this paper is organized as follows. We first provide the notation, preliminaries, and problem formulation in Section \ref{section:problemstatement}. Then, Section~\ref{section:decomposition} provides our robustness decomposition method that transforms the optimization problem to an efficient DC form. Section \ref{sec:subproblem} provides some properties of the resulting CCP-based SQP framework. Section \ref{section:example} demonstrates the effectiveness of our method over a state-of-the-art naive SQP method.

\section{Preliminaries}\label{section:problemstatement}

$\mathbb{R}$, and $\mathbb{Z}$ are defined as the fields of real and integer numbers, respectively. Given $a, b \in \mathbb{Z}$ with $a < b$, $[a, b]$ denotes a set of integers from $a$ to $b$. True and false are denoted by $\top$ and $\perp$. $I_n$ denotes the identity matrix of size $n$. $0_{n\times m}$ denotes the zero matrix of size $n\times m$.

\subsection{System description}
Throughout this paper, we consider a discrete-time linear system:
\begin{equation}\label{eq:system}
x_{t+1} =A x_t+B u_t, 
\end{equation}
where $A\in\mathbb{R}^{n\times n},B\in\mathbb{R}^{n\times m},$ and $x_t \in \mathcal{X} \subseteq \mathbb{R}^n, u_t \in \mathcal{U} \subseteq$ $\mathbb{R}^m$ denote the state and input. We assume that $\mathcal{X}$ and $\mathcal{U}$ are a conjunction of polyhedra.
Given an initial state $x_0\in\mathcal{X}$ and a sequence of control inputs $\boldsymbol{u}=(u_0, \ldots,u_{T-1})$, the sequence of states $\boldsymbol{x}=(x_0, \ldots,x_{T})$, which we call a \textit{trajectory}, is uniquely generated.

\subsection{Specification by Signal temporal logic}

In this study, we consider the specifications given in STL. 
STL is a predicate logic defined for specifying properties for continuous signals \cite{Fainekos2009-yn}. STL consists of predicates $\mu$ that can be obtained through a function $g^\mu(\cdot)$ as follows:
\begin{equation}
\mu=\begin{cases}
\top \text { if } g^\mu(\boldsymbol{x}) \leq 0 \\
\perp \text { if } g^\mu(\boldsymbol{x})>0.
\end{cases}
\end{equation}
In this study, we consider affine functions for $g^\mu$ of the form $g^\mu=a^{\mathsf{T}}x_t-b: \mathcal{X} \rightarrow\mathbb{R}$, where $a\in\mathbb{R}^{n}$ and $b\in\mathbb{R}$.
In addition to standard boolean operators $\wedge, \vee$, in STL, temporal operators $\square$ (\textit{always}), $\Diamond$ (\textit{eventually}), and $\boldsymbol{U}$ (\textit{until}) are also considered. Each temporal operator has an associated bounded time interval $[t_1,t_2]$ where $0\leq t_1<t_2<\infty$. We assume that STL formulae are written in negation normal form (NNF) without loss of generality \cite{sadradd_monotone}. STL formulae in NNF have negations only in front of the predicates \cite{Sadraddini2015-dn} and we omit such negations from STL syntax. We refer to this negation-free STL as STL, which is defined by \cite{Baier2008-up}:
$$
\begin{aligned}
\varphi:=\mu| \vee \varphi | \wedge \varphi|\square_{\left[t_1, t_2\right]} \varphi| \Diamond_{\left[t_1, t_2\right]} \varphi \mid \varphi_1 \boldsymbol{U}_{\left[t_1, t_2\right]} \varphi_2.
\end{aligned}
$$

The notion of \textit{robustness} is a useful semantic defined for STL formulae, which is a real-valued function that describes how much a trajectory satisfies an STL formula. Let $(\boldsymbol{x}, t)$ denote a trajectory starting at timestep $t$, i.e., $(\boldsymbol{x}, t)=x_t,x_{t+1},...,x_T$. The trajectory length $T$ should be selected large enough to calculate the satisfaction of a formula (see e.g., \cite{Sadraddini2015-dn}). 
Given an STL formula $\varphi$, we define the \textit{reversed} robustness with respect to a trajectory $\boldsymbol{x}$ and a time $t$ that can be obtained recursively according to the following quantitative semantics:
\begin{comment}
\begin{equation}\label{eq:robustness}
    \begin{aligned}
    & \rho_{\text{orig}}^\mu((\bold{x}, t))=-g^\mu\left(x_t\right)\\
    & \rho_{\text{orig}}^{\varphi_1 \wedge \varphi_2}((\bold{x}, t))=\min \left(\rho_{\text{orig}}^{\varphi_1}((\bold{x}, t)), \rho_{\text{orig}}^{\varphi_2}((\bold{x}, t))\right)\\
    & \rho_{\text{orig}}^{\varphi_1 \vee \varphi_2}((\bold{x}, t))=\max \left(\rho_{\text{orig}}^{\varphi_1}((\bold{x}, t)), \rho_{\text{orig}}^{\varphi_2}((\bold{x}, t))\right)\\
    & \rho_{\text{orig}}^{\Diamond_{\left[t_1, t_2\right] \varphi}}((\bold{x}, t))=\max _{t^{\prime} \in\left[t+t_1, t+t_2\right]}\left(\rho_{\text{orig}}^{\varphi}\left(\left(\bold{x}, t^{\prime}\right)\right)\right)\\
    & \rho_{\text{orig}}^{\varphi_1 \boldsymbol{U}_{\left[t_1, t_2\right]} \varphi_2}((\bold{x}, t))=\min _{t^{\prime} \in\left[t+t_1, t+t_2\right]}\left(\max \left(\left[\rho_{\text{orig}}^{\varphi_1}\left(\left(\bold{x}, t^{\prime}\right)\right)\right.\right.\right., \\
    & \max _{t^{\prime \prime} \in\left[t+t_1, t^{\prime}\right]}(\rho_{\text{orig}}^{\varphi_2}((\bold{x}, t^{\prime \prime})))]))
    \end{aligned}
\end{equation}
Let $\bold{x}$ denote the point $(\bold{x},0)$ for the remainder of this paper. Thus, $(\bold{x},0) \vDash \varphi$ is equal to $\bold{x} \vDash \varphi$. 
\end{comment}

\begin{defi}\textit{(reversed-robustness)}\label{def:revrobustness}
\begin{subequations}\label{eq:robustnessrev}
\begin{align}
    & \rho_{\text{rev}}^\mu((\boldsymbol{x}, t))=g^\mu\left(x_t\right) \label{eq:reversed_predicate}\\
    & \rho_{\text{rev}}^{\varphi_1 \wedge \varphi_2}((\boldsymbol{x}, t))=\max \left(\rho_{\text{rev}}^{\varphi_1}((\boldsymbol{x}, t)), \rho_{\text{rev}}^{\varphi_2}((\boldsymbol{x}, t))\right) \label{eq:reversed_conj}\\
    & \rho_{\text{rev}}^{\varphi_1 \vee \varphi_2}((\boldsymbol{x}, t))=\min \left(\rho_{\text{rev}}^{\varphi_1}((\boldsymbol{x}, t)), \rho_{\text{rev}}^{\varphi_2}((\boldsymbol{x}, t))\right)\\
    & \rho_{\text{rev}}^{\square_{\left[t_1, t_2\right] \varphi}}((\boldsymbol{x}, t))=\max _{t^{\prime} \in\left[t+t_1, t+t_2\right]}\left(\rho_{\text{rev}}^{\varphi}\left(\left(\boldsymbol{x}, t^{\prime}\right)\right)\right)\\
    & \rho_{\text{rev}}^{\Diamond_{\left[t_1, t_2\right] \varphi}}((\boldsymbol{x}, t))=\min _{t^{\prime} \in\left[t+t_1, t+t_2\right]}\left(\rho_{\text{rev}}^{\varphi}\left(\left(\boldsymbol{x}, t^{\prime}\right)\right)\right)\\
    & \rho_{\text{rev}}^{\varphi_1 \boldsymbol{U}_{\left[t_1, t_2\right]} \varphi_2}((\boldsymbol{x}, t))=\max _{t^{\prime} \in\left[t+t_1, t+t_2\right]}\biggl(\min \Bigl(\Bigl[\rho_{\text{rev}}^{\varphi_1}(\left(\boldsymbol{x}, t^{\prime}\right)), \nonumber\\
    &  \min _{t^{\prime \prime} \in[t+t_1, t^{\prime}]}\bigl(\rho_{\text{rev}}^{\varphi_2}((\boldsymbol{x}, t^{\prime \prime}))\bigr)\Bigr]\Bigr)\biggr) \label{eq:reversed_until}
\end{align}
\end{subequations}
\end{defi}
Note that we define the recursive semantics for the \textit{reversed} robustness function, i.e., the minus of the original robustness $\rho_{\text{orig}}^{\varphi}$ defined in the literature (see e.g., \cite[Definition 2]{Belta_undated-jj}) as $\rho_{\text{rev}}^{\varphi}=-\rho_{\text{orig}}^{\varphi}$. We reverse the sign of the robustness functions $\rho_{\text{rev}}^{\mu}$ of predicates $\mu$, and we interchange the $\max$ and $\min$ operators. The reversed-robustness function is introduced as a notational simplification within our robustness decomposition framework. This approach removes unfavorable minus sign symbols in the decomposition and establishes a correspondence between convex parts in the optimization and conjunctive operators in the specification, and similarly for concave parts. It is worth mentioning that this modification does not change all the properties of the original robustness as we simply reverse the signs in the definition of the robustness function.

STL can be represented as a tree \cite{Kurtz2022-pe,Sun2022-fg}. An example is shown below. 
\begin{example}\label{ex:target}
This example deals with a specification called two-target specifications, which is borrowed from \cite{Kurtz2022-pe}. The formula of this specification is given as:
\begin{equation}
\varphi = \Diamond_{[0, T-T_d]}\left(\square_{[0,T_d]} B_1 \vee \square_{[0,T_d]} B_2\right) \wedge \square_{[0, T]} \neg O \wedge \Diamond_{[0, T]} G. \label{either_spec}
\end{equation}

Figure \ref{fig:either-MIP} shows the graphical representation of the state trajectory satisfying the two-target specification with $T=25, T_d=5$. The robot must remain in one of the two regions, either $B_1$ or $B_2$, for a dwelling time of $T_d$, and it must reach the goal region $G$ within the trajectory length of $T$ while avoiding the obstacle region $O$.

Figure \ref{fig:tree_two_target} shows a tree description of the two-target specification for reversed-robustness functions $\rho^{\varphi}$ with $T=1$ and $T_d=1$. 
While logically equivalent STL formulas can have different tree descriptions, we adopt the \textit{simplified form} of the tree description, which is uniquely determined per STL formula (for details, see \cite{Kurtz2022-pe}). 
The tree nodes marked ``LP" at the bottom represent linear predicates, while the top (outermost) node is marked ``and." 
The depth of this tree is four and the height of the predicates is either ``and-or-and-predicate" or ``and-or-predicate" regardless of the value of $T$ and $T_d$. Note that although this figure represents the specification with $T=1, T_d=1$ for visibility, the depth and height of the tree do not change even when we vary these parameters because changing them will only increase the number of predicates in each depth.
\end{example}

\subsection{Smooth approximation}\label{subsec:smooth}

The robustness function that results from \eqref{eq:robustnessrev} is not differentiable due to the $\operatorname{\max}$ and $\operatorname{\min}$ operators, and thus, we cannot calculate the gradients of this function. 
Recent papers such as \cite{Pant_smooth} smooth these functions in order to apply gradient-based methods. However, because smoothing both the $\max$ and $\min$ of the function leads to a problem that deviates from the original, it is preferable to minimize the parts that are smoothed. Our method requires to smooth \textit{only} $\min$ functions. This is because $\max$ functions in the robustness are rather decomposed in our method as shown in Section \ref{section:decomposition}. The popular smooth approximation of the $\max$ and $\min$ operators is by using the log-sum-exp (LSE) function. This approximation for $\min$ operators is given as:
\begin{equation}
\overline{\min }_k(a_1,...,a_r):=-\frac{1}{k} \ln \sum_{i=1}^r e^{-k a_i},\label{LSEmin}
\end{equation}
%[a_1,...,a_r]^{\mathsf{T}}\in\mathbb{R}^{r}
where $a_i\in\mathbb{R}$ and $k$ is the smooth parameter. Now, we define a new robustness measure that will be used throughout this paper. 
\begin{defi}\textit{(Smoothed reversed-robustness)}\label{def:newrobustness}
The smoothed reversed-robustness $\overline{\rho}^{\varphi}_{\text{rev}}(\boldsymbol{x})$ is defined by the quantitative semantics \eqref{eq:robustnessrev} where every $\min$ operator is replaced by $\overline{\min}_k$.
\end{defi}

\subsection{Problem statement}
This study considers the problem of synthesizing the trajectory that satisfies the STL specification in a maximally robust way. 
Given the system \eqref{eq:system} and the specification $\varphi$ and the initial state $x_0$, the optimization problem is as follows:
\begin{subequations}\label{eq:moto}
\begin{align}
%p^{\ast}=
\min _{\boldsymbol{x}, \boldsymbol{u}} \hspace{0.5em}&\quad \overline{\rho}_{\text{rev}}^{\varphi}(\boldsymbol{x}) \label{eq:min_x1}\\
\text { s.t. } & x_{t+1}=A x_t+B u_t \label{eq:min_sys1}\\
& x_t \in \mathcal{X}, u_t \in \mathcal{U} \label{eq:min_xu1}\\
& \overline{\rho}_{\text{rev}}^{\varphi}(\boldsymbol{x})\leq 0
\label{eq:min_xuu1}
\end{align}
\end{subequations}

Note that the reversed-robustness $\rho_{\text{rev}}^{\varphi}(\boldsymbol{x})$ is replaced with the smoothed reversed-robustness $\overline{\rho}_{\text{rev}}^{\varphi}(\boldsymbol{x})$ in \eqref{eq:min_x1} and \eqref{eq:min_xuu1}. Thus, although a feasible solution of the non-smoothed problem always satisfies the specification \eqref{either_spec}, a feasible solution of \eqref{eq:moto} does not necessarily satisfy the specification (as it under-approximates the non-smoothed robustness). However, the approximation error can be made arbitrarily small by making the smooth parameter $k$ sufficiently large \cite{Pant_smooth,Gilpin2021-wv}.

\section{A structure-aware decomposition of STL formulae}\label{section:decomposition}

In this section, we present our robustness decomposition framework. The goal of our robustness decomposition is to transform the problem in order to apply a convex-concave procedure (CCP)-based algorithm. 

\subsection{Convex-concave procedure (CCP)}\label{subsec:ccp}
The convex-concave procedure \cite{Lipp2016-fa,Shen2016-sn} is a heuristic method for finding a local optimum of a class of optimization problems called the difference of convex (DC) programs, which is defined as follows: 
\begin{defi}\textit{(Difference of convex (DC) programs)}
\begin{subequations}\label{eq:dc}
\begin{align}
&\operatorname{minimize} \quad f_0(\boldsymbol{z})-g_0(\boldsymbol{z}) \label{eq:dc1}\\
&\text{subject to } \quad f_i(\boldsymbol{z})-g_i(\boldsymbol{z}) \leq 0, \quad i\in\{1, \ldots, m\},\label{eq:dc2}
\end{align}
\end{subequations}
where $\boldsymbol{z}\in\mathbb{R}^h$ is the vector of $h$ optimization variables and the functions $f_i: \mathbb{R}^h \rightarrow \mathbb{R}$ and $g_i: \mathbb{R}^h \rightarrow \mathbb{R}$ are convex for $i\in\{0, \ldots, m\}$. The DC problem \eqref{eq:dc} can also include equality constraints 
\begin{equation}\label{eq:dc3}
p_i(\boldsymbol{z})=q_i(\boldsymbol{z}),
\end{equation}
where $p_i$ and $q_i$ are convex. These equality constraints are expressed as the pair of inequality constraints
\begin{align}\label{eq:dc4}
p_i(\boldsymbol{z})-q_i(\boldsymbol{z}) \leq 0, \quad q_i(\boldsymbol{z})-p_i(\boldsymbol{z}) \leq 0.
\end{align}
\end{defi}

The basic idea of CCP is simple; at each iteration of the optimization of the DC program \eqref{eq:dc}, we replace concave terms with a convex upper bound obtained by linearization and then solve the resulting convex problem using convex optimization algorithms.
It can be shown that all of the iterates are feasible if the starting point lies within the feasible set and the objective value converges (possibly to negative infinity, see \cite{Lipp2016-fa,Sriperumbudur_lanckriet_2009} for proof).

Note that either of the functions $p_i(\boldsymbol{z}), q_i(\boldsymbol{z})$ in equality constraints \eqref{eq:dc3} are also approximated by the CCP. This is because an equality constraint is transformed into two inequalities \eqref{eq:dc4} and both convex functions can make either inequality concave. Therefore, we should avoid introducing equality constraints if possible.

\subsection{Robustness decomposition}\label{subsection:first_step_of_}
Our decomposition framework alters the problem into an efficient form of a DC program considering the aforementioned features of CCP-based algorithms. This decomposition not only transforms the problem into a DC program but also enhances the efficiency of the overall algorithm. As a result, the resulting program approximates only the truly concave parts, and our approach results in a new SQP method. 

To this end, we need to know whether each function in the program \eqref{eq:moto} is convex or concave. 
In this program, the robustness function $\overline{\rho}_{\text{rev}}^{\varphi}$ is the only part whose curvature (convex or concave) cannot be determined.
Thus, if the robustness function $\overline{\rho}_{\text{rev}}^{\varphi}$, which is a composite function consisting of $\max$ and $\min$ operators, can be rewritten as a combination of convex and concave functions, the entire program can be reduced to a DC program.
The proposed framework achieves this by recursively decomposing the outermost operator of the robustness function until all the arguments of the functions are affine. As this study considers only affine functions for predicates $g^{\mu}(\cdot)$, if the arguments of a function are all predicates, the convexity or concavity of the function can be determined by examining the operator preceding it, which may be either a $\max$ or $\min$ function.
The recursive decomposition of a nonconvex robustness function into predicates corresponds to a decomposition of the robustness tree from the top to the bottom, in accordance with the tree structure.

Our method utilizes the idea of the epigraphic reformulation methods. This method transforms the convex (usually linear) cost function into the corresponding epigraph constraints. However, our approach is different from the normal epigraphic reformulation in the sense that we deal with \textit{nonconvex} cost functions and we have to apply the reformulation \textit{recursively}. Our robustness decomposition transforms \eqref{eq:moto} into a more efficient form of DC program. To demonstrate our main idea, for the remainder of this paper, we use the simplified two-target specification described in Example \ref{ex:target} as the specification $\varphi$.

As the robustness function $\overline{\rho}_{\text{rev}}^{\varphi}$ is in the cost function, the first procedure is to decompose the function from the cost function into a set of constraints. 
As we consider the two-target specification, the outermost operator of the robustness function $\overline{\rho}_{\text{rev}}^{\varphi}$ is $\max$, i.e., $\overline{\rho}_{\text{rev}}^{\varphi} = \max(\overline{\rho}_{\text{rev}}^{\varphi_1},\overline{\rho}_{\text{rev}}^{\varphi_2},...,\overline{\rho}_{\text{rev}}^{\varphi_r})$. We introduce a new variable $s_\xi$, and reformulate the program as follows.
\vspace{-0.5cm}
\begin{subequations}\label{eq:max_trans}
\begin{align}
\min _{\boldsymbol{x}, \boldsymbol{u},s_\xi}\hspace{0.2em} &s_\xi \label{eq:min_xi}\\
\text { s.t. } & x_{t+1}=A x_t+B u_t \label{eq:x}\\
& x_t \in \mathcal{X}, u_t \in \mathcal{U} \label{eq:xi_}\\
& s_\xi \leq 0   \label{eq:xi_0}\\
& \overline{\rho}_{\text{rev}}^{\varphi_1}\left(\boldsymbol{x}\right) \leq s_\xi \dots \overline{\rho}_{\text{rev}}^{\varphi_r}\left(\boldsymbol{x}\right)
\leq s_\xi \label{eq:xl}
\end{align}
\end{subequations}
Note that the set of inequalities \eqref{eq:xl} is equivalent to \begin{equation}\label{eq:eq1}
\max(\overline{\rho}_{\text{rev}}^{\varphi_1},\overline{\rho}_{\text{rev}}^{\varphi_2},...,\overline{\rho}_{\text{rev}}^{\varphi_r}) \leq s_\xi
\end{equation}
although the original semantics of the variable $s_\xi$ is the \textit{equality} constraint:
\begin{equation}\label{eq:eq}
\max(\overline{\rho}_{\text{rev}}^{\varphi_1},\overline{\rho}_{\text{rev}}^{\varphi_2},...,\overline{\rho}_{\text{rev}}^{\varphi_r}) = s_\xi.
\end{equation}
However, it is well known that this kind of $\max$ transformation is known to be equivalent, particularly for the convex case. Although \eqref{eq:max_trans} is nonconvex, we can also show that both 
\eqref{eq:moto} and \eqref{eq:max_trans} are equivalent (see \cite{Lindemann2019-os} and \cite[Section 8.3.4.4]{El_Ghaoui2015-eb}).
The proof of this equivalence is provided in Appendix \ref{app:maxtran} for a later explanation (in particular \eqref{eq:ii_minmax}).

As each $\overline{\rho}_{\text{rev}}^{\varphi_i}$ for $i=1,\ldots,r$ is a robustness function, each inequality in \eqref{eq:xl} can be restated as a constraint in one of the following two forms, depending on whether the outermost operator is $\max$ or $\min$:
\begin{align}
\max(\overline{\rho}_{\text{rev}}^{\Phi_1},...,\overline{\rho}_{\text{rev}}^{\Phi_{y_{\max}}})\leq s_{\xi}, \label{eq:max_before}\\
\overline{\min}(\overline{\rho}_{\text{rev}}^{\Psi_1},...,\overline{\rho}_{\text{rev}}^{\Psi_{y_{\min}}}) \leq s_{\xi}, \label{eq:mellowmin} 
\end{align}
where functions $\overline{\rho}_{\text{rev}}^{\Phi_j} (j\in\{1,...,y_{\max}\})$ (resp. $\overline{\rho}_{\text{rev}}^{\Psi_j}(j\in\{1,...,y_{\min}\})$) are robustness functions associated with $\Phi_j$ (resp. $\Psi_j$), which are the subformulas of $\varphi_i (i\in\{1,...,r\})$.
Note that these arguments of the $\max$ and $\min$ functions are still not necessarily predicates. We continue the following steps until all the arguments in each function become the predicates.

For inequality constraints of the form \eqref{eq:max_before}, we transform it as follows:
\begin{equation}\label{eq:max_after}
\begin{aligned}[b]
\overline{\rho}_{\text{rev}}^{\Phi_1}\leq s_{\xi},\overline{\rho}_{\text{rev}}^{\Phi_2}\leq s_{\xi} ,..., \overline{\rho}_{\text{rev}}^{\Phi_{y_{\max}}}\leq s_{\xi}.
\end{aligned}
\end{equation}
This is of course equivalent to \eqref{eq:max_before}. 

For constraints of the forms \eqref{eq:mellowmin}, we first check whether each argument of the $\overline{\min}$ function is a predicate or not. If not, we replace such an argument with a new variable. Because we consider the simplified two-target specification, the outermost operator of the function $\overline{\rho}_{\text{rev}}^{\Psi_j} (j\in\{1,...,y_{\min}\})$ is $\max$, i.e., $\overline{\rho}_{\text{rev}}^{\Psi_j}=\max(\overline{\rho}_{\text{rev}}^{\psi_1},...,\overline{\rho}_{\text{rev}}^{\psi_h})$. We first replace function $\overline{\rho}_{\text{rev}}^{\Psi_j}$ in \eqref{eq:mellowmin} by a new variable $s_{\text{new}}$ as 
\begin{equation}\label{eq:min_slack}
\overline{\min}(\overline{\rho}_{\text{rev}}^{\Psi_1},...,s_{\text{new}},...,\overline{\rho}_{\text{rev}}^{\Psi_{y_{\min}}})\leq s_\xi.
\end{equation}
Then, we add the following constraints:
\begin{equation}\label{eq:ii_minmax}
\max(\overline{\rho}_{\text{rev}}^{\psi_1},...,\overline{\rho}_{\text{rev}}^{\psi_h}) \leq s_{\text{new}}.
\end{equation}
Although the inequality constraint \eqref{eq:ii_minmax} should be equality, this modification is justified by the similar discussion for \eqref{eq:xl} even though the equivalence of this transformation is not immediately obvious (see the proof of Theorem \ref{theo:global}).

All the inequalities in \eqref{eq:max_after} and \eqref{eq:ii_minmax} are either in the $\max$ form \eqref{eq:max_before} or the $\min$ form \eqref{eq:mellowmin}. Thus, subsequent to these steps, the constraints of the forms \eqref{eq:max_before}, \eqref{eq:mellowmin} are decomposed into several constraints in the same forms again.
We repeat this procedure until we reach the predicates, which is the bottom of the tree in Figure \ref{fig:tree_two_target}. When we finish the above recursive manipulations, \eqref{eq:moto} is transformed into an DC program as follows:
\begin{subequations}\label{eq:final}
\begin{align}
        \min _{\boldsymbol{x}, \boldsymbol{u},s_\xi,\boldsymbol{s}_{\max},\boldsymbol{s}_{\min}}
        \hspace{0.2em}&s_\xi \label{eq:final_xx}\\
        \text { s.t. } & x_{t+1}=A x_t+B u_t  \label{eq:final_xuu}\\
        & x_t \in \mathcal{X}, u_t \in \mathcal{U}\label{eq:final_xu} \\
        & s_\xi \leq 0,  \label{eq:final_0} \\
&
\left.\hspace{-1cm}
\begin{tabular}{l}
$\overline{\rho}_{\text{rev}}^{\lambda^{(1)}_1}\leq s^{(1)}_{\max},...,\overline{\rho}_{\text{rev}}^{\lambda^{(1)}_{y_{\max}}}\leq s^{(1)}_{\max}$  \\
$\quad\quad\vdots \hspace{2.5cm}\vdots $  \\
$\overline{\rho}_{\text{rev}}^{\lambda^{(v)}_1}\leq s^{(v)}_{\max},...,\overline{\rho}_{\text{rev}}^{\lambda^{(v)}_{y_{\max}}}\leq s^{(v)}_{\max}$  \\
\end{tabular}
\right\} \begin{tabular}{l}\text{from }$\max$\\
\text{functions} \end{tabular}\label{eq:finalmax}\\
&
\left.\hspace{-1cm}
\begin{tabular}{l}
$\overline{\min}(\overline{\rho}_{\text{rev}}^{\mu^{(1)}_1},...,\overline{\rho}_{\text{rev}}^{\mu^{(1)}_{y_{\min}}}) \leq s^{(1)}_{\min}$ \\
$\quad\quad\vdots$ \\
$\overline{\min}(\overline{\rho}_{\text{rev}}^{\mu^{(w)}_1},...,\overline{\rho}_{\text{rev}}^{\mu^{(w)}_{y_{\min}}}) \leq s^{(w)}_{\min}, $\\
\end{tabular}
\right\}  \begin{tabular}{l}\text{from }$\min$\\
\text{functions} \end{tabular}\label{eq:finalmin}
\end{align}
\end{subequations}
where all the formulas $\lambda_1^{(\cdot)},...,\lambda_{y_{\max}}^{(\cdot)},\mu_1^{(\cdot)},...,\mu_{y_{\min}}^{(\cdot)}$ in \eqref{eq:finalmax} and \eqref{eq:finalmin} are predicates and $\boldsymbol{s}_{\max} = (s^{(1)}_{\max},...,s^{(v)}_{\max})$ and $\boldsymbol{s}_{\min} = (s^{(1)}_{\min},...,s^{(w)}_{\min})$. Note that, with some abuse of notation, some variables in $\boldsymbol{s}_{\max},\boldsymbol{s}_{\min}$ in constraints \eqref{eq:finalmax},\eqref{eq:finalmin} may represent $s_\xi$ and $s_{\text{new}}$, and some arguments of the $\overline{\min}$ function in \eqref{eq:finalmin} are not affine predicates but the new variables.

Important properties of this program \eqref{eq:final} are stated below. 
%For the remainder of this paper, let $\boldsymbol{z}'=(\boldsymbol{x}', \boldsymbol{u}',s'_\xi,\boldsymbol{s}'_{\max},\boldsymbol{s}'_{\min})$ denote a feasible solution for program \eqref{eq:final}.

\begin{prop}\label{lem:robust_vs_slack}
Let $\boldsymbol{z}'=(\boldsymbol{x}', \boldsymbol{u}',s'_\xi,\boldsymbol{s}'_{\max},\boldsymbol{s}'_{\min})$ denote a feasible solution for program \eqref{eq:final}. Then, $\overline{\rho}_{\text{rev}}^{\varphi}(\boldsymbol{x}')\leq 0$ holds.
\end{prop}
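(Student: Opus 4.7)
The plan is to prove the slightly stronger statement $\overline{\rho}_{\text{rev}}^{\varphi}(\boldsymbol{x}') \leq s'_\xi$ by induction along the decomposition tree, and then combine it with constraint \eqref{eq:final_0} to conclude.

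First I would record the single technical ingredient that is not entirely trivial: the smoothed minimum $\overline{\min}_k$ is non-decreasing in each argument. This is immediate from \eqref{LSEmin}, since $\partial\overline{\min}_k / \partial a_j = e^{-k a_j}/\sum_{i=1}^r e^{-k a_i} \geq 0$. This monotonicity is precisely what is needed to turn the inequality-relaxation \eqref{eq:ii_minmax} into an upper bound on a $\overline{\min}$ expression when the original argument is reinstated in place of the slack variable.

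Next, I would set up a structural induction on the height of the nodes of the simplified-form tree of $\varphi$ (as in Figure~\ref{fig:tree_two_target}) that get replaced by a slack variable during the decomposition of Section~\ref{subsection:first_step_of_}. The inductive claim is that, for every such node $\nu$ with assigned slack $s_\nu$ (namely $s_\xi$ at the root, and the components of $\boldsymbol{s}_{\max}$ or $\boldsymbol{s}_{\min}$ at internal nodes), feasibility of \eqref{eq:final} forces $\overline{\rho}_{\text{rev}}^{\nu}(\boldsymbol{x}') \leq s'_\nu$. For a $\max$-node decomposed as in \eqref{eq:max_after}, every child constraint yields $\overline{\rho}_{\text{rev}}^{\Phi_j}(\boldsymbol{x}') \leq s'_\nu$ (directly when $\Phi_j$ is a predicate appearing in \eqref{eq:finalmax}, and by the induction hypothesis when $\Phi_j$ was itself replaced by a further slack variable), whence taking the pointwise maximum gives $\overline{\rho}_{\text{rev}}^{\nu}(\boldsymbol{x}') = \max_j \overline{\rho}_{\text{rev}}^{\Phi_j}(\boldsymbol{x}') \leq s'_\nu$. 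For a $\min$-node appearing as in \eqref{eq:finalmin}, the induction hypothesis applied to each slack-replaced argument gives $\overline{\rho}_{\text{rev}}^{\Psi_j}(\boldsymbol{x}') \leq s'_{\text{new},j}$, and then monotonicity of $\overline{\min}_k$ together with the constraint listed in \eqref{eq:finalmin} yields $\overline{\rho}_{\text{rev}}^{\nu}(\boldsymbol{x}') = \overline{\min}(\overline{\rho}_{\text{rev}}^{\Psi_1},\ldots,\overline{\rho}_{\text{rev}}^{\Psi_{y_{\min}}}) \leq \overline{\min}(\ldots,s'_{\text{new},j},\ldots) \leq s'_\nu$. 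Applying the hypothesis at the root $\varphi$ delivers $\overline{\rho}_{\text{rev}}^{\varphi}(\boldsymbol{x}') \leq s'_\xi$, and combining with \eqref{eq:final_0} concludes the argument.

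The main obstacle I expect is notational rather than mathematical: one must track which slack variable in $\boldsymbol{s}_{\max}$ or $\boldsymbol{s}_{\min}$ corresponds to which subformula of $\varphi$, and formulate the induction so that both the $\max$-decomposition \eqref{eq:max_before}$\to$\eqref{eq:max_after} and the $\min$-replacement \eqref{eq:min_slack}--\eqref{eq:ii_minmax} are handled uniformly despite the alternation of $\max$ and $\min$ layers in the simplified-form tree. Once the bookkeeping is in place, the actual arithmetic---a trivial $\max$-bound and a one-line monotonicity argument for $\overline{\min}_k$---is essentially automatic.
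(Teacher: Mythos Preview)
Your proposal is correct and follows essentially the same approach as the paper's proof: both establish the intermediate inequality $\overline{\rho}_{\text{rev}}^{\varphi}(\boldsymbol{x}') \leq s'_\xi$ by an inductive/recursive argument along the decomposition tree, using the monotonicity of $\overline{\min}_k$ to handle the slack-replaced arguments, and then invoke \eqref{eq:final_0}. Your formulation as an explicit structural induction on node height is somewhat more formal than the paper's ``proof sketch,'' which handles one representative replacement step and then appeals to ``similar analogies'' for the rest, but the mathematical content is identical.
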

\begin{proof}
See Appendix \ref{app:robust_s_xi}.
\end{proof}

\begin{prop}\label{lem:rev_feasible}
If $\boldsymbol{z}'$ is feasible for \eqref{eq:final}, $(\boldsymbol{x}',\boldsymbol{u}')$ is feasible for \eqref{eq:moto}.
\end{prop}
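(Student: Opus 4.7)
The plan is to check the three defining constraints of \eqref{eq:moto} one by one for the pair $(\boldsymbol{x}',\boldsymbol{u}')$ extracted from the feasible point $\boldsymbol{z}'=(\boldsymbol{x}',\boldsymbol{u}',s'_\xi,\boldsymbol{s}'_{\max},\boldsymbol{s}'_{\min})$ of \eqref{eq:final}. Two of those constraints are inherited for free: the dynamics constraint \eqref{eq:min_sys1} is literally the same as \eqref{eq:final_xuu}, and the state/input constraint \eqref{eq:min_xu1} is literally the same as \eqref{eq:final_xu}. Since $\boldsymbol{z}'$ is feasible for \eqref{eq:final}, both are already satisfied by $(\boldsymbol{x}',\boldsymbol{u}')$.

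The only nontrivial obligation is the robustness constraint \eqref{eq:min_xuu1}, namely $\overline{\rho}_{\text{rev}}^{\varphi}(\boldsymbol{x}')\leq 0$. This, however, is exactly the conclusion of Proposition \ref{lem:robust_vs_slack}, whose hypothesis (feasibility of $\boldsymbol{z}'$ for \eqref{eq:final}) is the same hypothesis we are working under here. Invoking it gives $\overline{\rho}_{\text{rev}}^{\varphi}(\boldsymbol{x}')\leq 0$ immediately, so all three constraints of \eqref{eq:moto} hold for $(\boldsymbol{x}',\boldsymbol{u}')$, which completes the argument.

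There is no real obstacle: the entire content of the proposition is transporting the robustness bound from the decomposed program back to the original problem, and that bound was already established in Proposition \ref{lem:robust_vs_slack} via the recursive decomposition argument (the delicate step, involving \eqref{eq:ii_minmax}, is absorbed there). Consequently the proof is a one-line appeal to Proposition \ref{lem:robust_vs_slack} together with the observation that the shared constraints carry over verbatim.
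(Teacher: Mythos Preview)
Your proof is correct and follows essentially the same approach as the paper: you invoke Proposition~\ref{lem:robust_vs_slack} for the robustness constraint \eqref{eq:min_xuu1} and note that \eqref{eq:min_sys1}, \eqref{eq:min_xu1} coincide verbatim with \eqref{eq:final_xuu}, \eqref{eq:final_xu}. The paper's proof is just a more compressed version of exactly this argument.
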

\begin{proof}
The inequality $\overline{\rho}_{\text{rev}}^{\varphi}(\boldsymbol{x}')\leq 0$ holds from Proposition \ref{lem:robust_vs_slack}, which means that \eqref{eq:min_xuu1} holds. Note that \eqref{eq:min_sys1} and \eqref{eq:min_xu1} are the same as \eqref{eq:final_xuu} and \eqref{eq:final_xu}. 
\end{proof}

\begin{theo}\label{theo:global}
The global optimum of two programs: \eqref{eq:moto} and \eqref{eq:final} is the same.
\end{theo}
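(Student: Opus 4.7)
The plan is to establish the equality of the two optimal values by a double inequality: writing $J_1^*$ and $J_2^*$ for the optimal values of \eqref{eq:moto} and \eqref{eq:final} respectively, I will show $J_1^* \leq J_2^*$ and $J_1^* \geq J_2^*$. The argument parallels the standard $\max$-to-epigraph equivalence recalled in Appendix \ref{app:maxtran}, applied recursively along the simplified STL tree of Figure \ref{fig:tree_two_target}.

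For $J_1^* \leq J_2^*$, I would start from any feasible $\boldsymbol{z}'=(\boldsymbol{x}',\boldsymbol{u}',s'_\xi,\boldsymbol{s}'_{\max},\boldsymbol{s}'_{\min})$ of \eqref{eq:final} and strengthen Proposition \ref{lem:robust_vs_slack} from $\overline{\rho}_{\text{rev}}^{\varphi}(\boldsymbol{x}') \leq 0$ to the sharper bound $\overline{\rho}_{\text{rev}}^{\varphi}(\boldsymbol{x}') \leq s'_\xi$. The same inductive walk up the decomposition tree used in Appendix \ref{app:robust_s_xi} yields this: at the leaves the predicate values are bounded via \eqref{eq:finalmax}--\eqref{eq:finalmin}; at each $\max$ node the bound on every child lifts to the parent since the parent equals the $\max$ of its children; at each $\overline{\min}$ node it lifts by componentwise monotonicity of $\overline{\min}_k$. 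Combined with Proposition \ref{lem:rev_feasible}, $(\boldsymbol{x}',\boldsymbol{u}')$ is feasible for \eqref{eq:moto} with objective $\overline{\rho}_{\text{rev}}^{\varphi}(\boldsymbol{x}') \leq s'_\xi$; minimizing over $\boldsymbol{z}'$ completes this direction.

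For the reverse inequality, I would take any feasible $(\boldsymbol{x}',\boldsymbol{u}')$ of \eqref{eq:moto} and explicitly construct a feasible tuple $\boldsymbol{z}'$ of \eqref{eq:final} attaining $s'_\xi = \overline{\rho}_{\text{rev}}^{\varphi}(\boldsymbol{x}')$. The construction assigns every auxiliary slack in $\boldsymbol{s}'_{\max}$ and $\boldsymbol{s}'_{\min}$ to be exactly the reversed-robustness of the subformula it represents, evaluated at $\boldsymbol{x}'$. Under this choice every constraint in \eqref{eq:finalmax} holds with equality because each slack equals the $\max$ of its children; every constraint in \eqref{eq:finalmin} likewise holds because, after substituting the assigned values for the internal variables inside $\overline{\min}$, the left-hand side equals the subformula's reversed-robustness, which is exactly the slack on the right. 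The bound $s'_\xi \leq 0$ then follows from the feasibility of $(\boldsymbol{x}',\boldsymbol{u}')$ in \eqref{eq:min_xuu1}. Minimizing over feasible $(\boldsymbol{x}',\boldsymbol{u}')$ gives $J_2^* \leq J_1^*$.

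The main obstacle is the treatment of the $\overline{\min}$ transformations \eqref{eq:min_slack}--\eqref{eq:ii_minmax} in the first direction: replacing a subexpression $\overline{\rho}_{\text{rev}}^{\Psi_j}$ by a new variable $s_{\text{new}}$ with only $s_{\text{new}} \geq \overline{\rho}_{\text{rev}}^{\Psi_j}$ in hand requires the componentwise monotonicity of $\overline{\min}_k$ in order to conclude $\overline{\min}(\dots,\overline{\rho}_{\text{rev}}^{\Psi_j},\dots) \leq \overline{\min}(\dots,s_{\text{new}},\dots)$. This monotonicity is immediate from the nonnegativity of the partial derivatives of \eqref{LSEmin}, so I would isolate it as a small auxiliary fact and then drive the formal induction on the depth of the simplified tree, which keeps the bookkeeping of the nested slack variables manageable.
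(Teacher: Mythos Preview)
Your proposal is correct and follows essentially the same route as the paper: the paper's Appendix~\ref{app:global} builds the identical ``lift'' of a feasible $(\boldsymbol{x}^*,\boldsymbol{u}^*)$ to \eqref{eq:final} by setting every slack to the reversed-robustness of its subformula (your second direction, stated there as Proposition~\ref{lem:forward_feasible}), and then uses the strengthened bound $\overline{\rho}_{\text{rev}}^{\varphi}(\boldsymbol{x}')\leq s'_\xi$ together with Proposition~\ref{lem:rev_feasible} (your first direction) to conclude optimality, the only cosmetic difference being that the paper phrases the latter step as a contradiction rather than as a direct inequality $J_1^*\leq J_2^*$. Your explicit isolation of the componentwise monotonicity of $\overline{\min}_k$ is exactly the ingredient the paper invokes in Appendix~\ref{app:robust_s_xi}.
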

\begin{proof}
See Appendix \ref{app:global}.
\end{proof}
\begin{comment}
\begin{remark}
Although we considered the decomposition only for the two-target specification in this study, our idea of the decomposition is general and can be applied to a whole class of STL specifications.
\end{remark}
\end{comment}

\begin{figure*}[t]
 \begin{minipage}[b]{0.47\linewidth}
  \centering
  \includegraphics[keepaspectratio, scale=0.5]{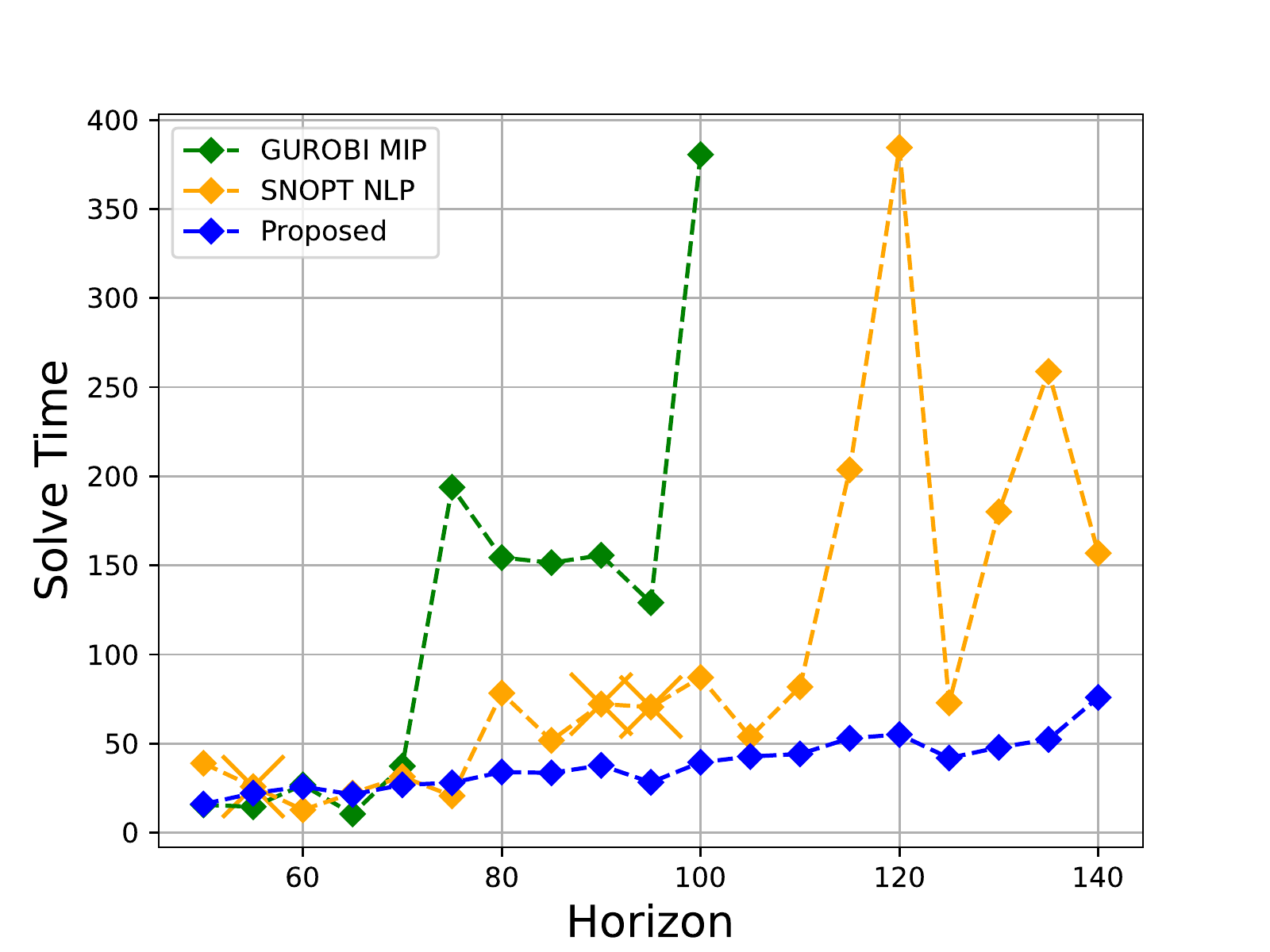}
  \subcaption{Computational times}\label{fig:solvetime_three}
 \end{minipage}
 \centering
  \begin{minipage}[b]{0.47\linewidth}
  \centering
 \mbox{\raisebox{0mm}{\includegraphics[keepaspectratio, scale=0.5]{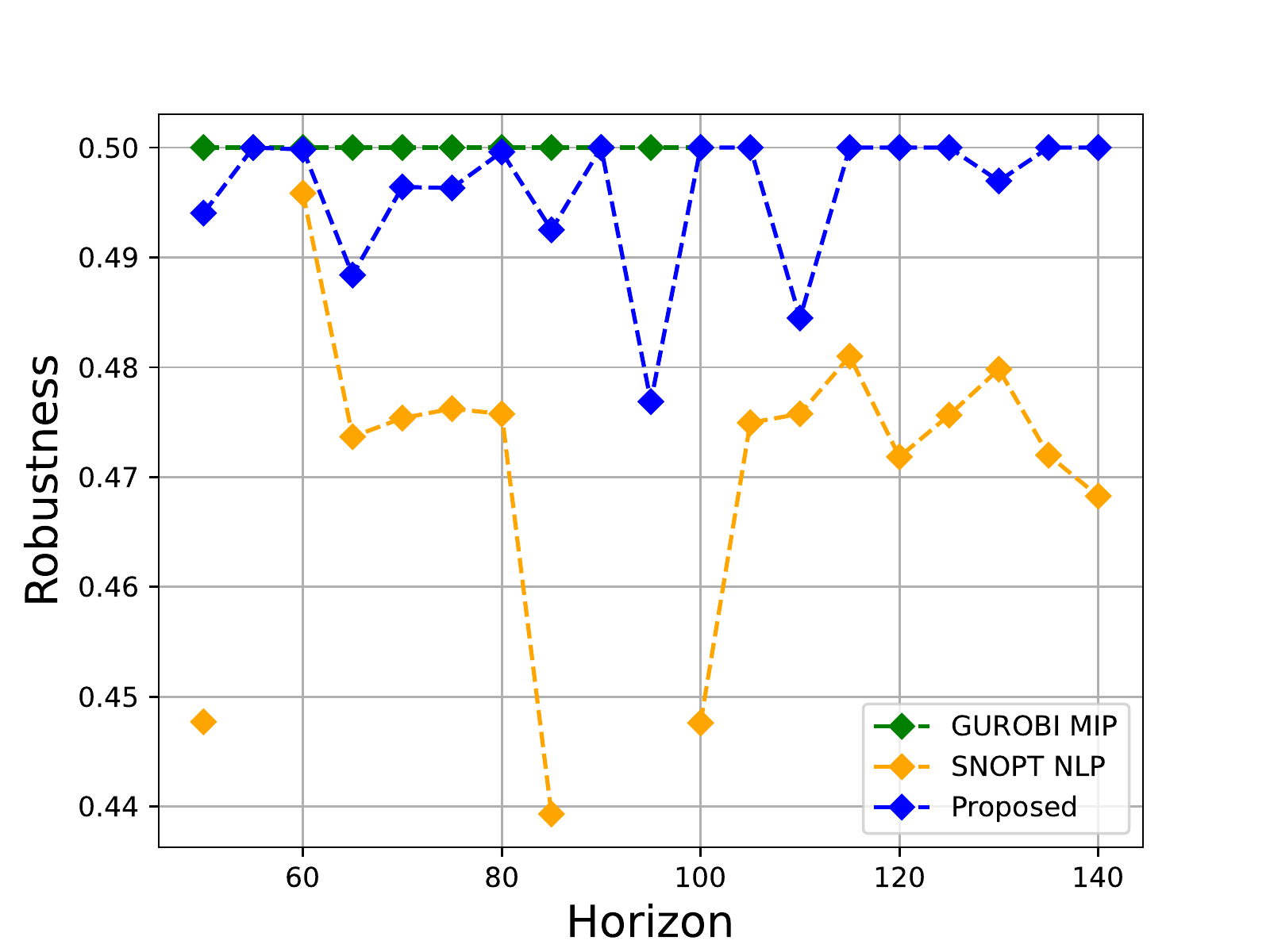}}}
  \subcaption{Robustness scores}
  \label{fig:robustness_three}
 \end{minipage}
 \caption{Computation time and robustness score of the three methods over every 5 horizons from $T=50$ to $140$.}\label{fig:three}
 %Numerical results of the three methods on the two-target specification over every 5 horizons from $T=50$ to $140$, showcasing the computation time on the left and the robustness score on the right.
\end{figure*}

\section{Properties of the subproblem}\label{sec:subproblem}
After we obtain the transformed program \eqref{eq:final}, we apply the CCP method described in Subsection \ref{subsec:ccp}. We majorize the $\overline{\min}$ operator in \eqref{eq:finalmin} at the current point of each iteration. In general, the resulting program becomes a convex program such as a second-order cone program (SOCP) and CCP has to solve the convex program sequentially.
However, the subproblem we solve at each iteration is a linear program or a quadratic program, and not the general convex program as stated in the following theorem:
\begin{theo}\label{theo:convexquadratic}
The program \eqref{eq:final} after the majorization by CCP is a linear program (LP). If we add the quadratic cost function, the program can be written as a convex quadratic program (QP).
\end{theo}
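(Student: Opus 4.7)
The plan is to verify that every constraint in \eqref{eq:final} becomes affine after the CCP majorization and that the cost is linear (or becomes convex quadratic when a quadratic regularizer is added), so that the resulting subproblem fits the LP (or QP) template. The structure of \eqref{eq:final} makes the bookkeeping straightforward once one isolates where nonconvexity enters.

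First I would go through the constraints one by one. The dynamics \eqref{eq:final_xuu} are affine in $(\boldsymbol{x},\boldsymbol{u})$; the set constraints \eqref{eq:final_xu} are polyhedral by the standing assumption on $\mathcal{X}$ and $\mathcal{U}$; and \eqref{eq:final_0} is trivially an affine inequality in $s_\xi$. Every constraint in \eqref{eq:finalmax} has the form $\overline{\rho}_{\text{rev}}^{\lambda}(\boldsymbol{x})\le s_{\max}$ where $\lambda$ is a predicate, so by \eqref{eq:reversed_predicate} the left-hand side equals $a^{\mathsf{T}}x_t-b$, which is affine in $\boldsymbol{x}$. Hence only the constraints \eqref{eq:finalmin} carry any nonconvexity.

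Next I would analyze those. Each constraint in \eqref{eq:finalmin} reads $\overline{\min}_k(y_1,\dots,y_{y_{\min}})\le s_{\min}$, where each argument $y_j$ is either a predicate robustness (affine in $\boldsymbol{x}$) or one of the intermediate slack variables introduced by the decomposition (linear in itself). Thus every $y_j$ is an affine function of the optimization variables. The key observation is that $\overline{\min}_k$ is concave on $\mathbb{R}^{y_{\min}}$: from \eqref{LSEmin} one has $\overline{\min}_k(a)=-k^{-1}\log\sum_i e^{-k a_i}$, which is $-k^{-1}$ times the standard log-sum-exp function composed with the affine map $a\mapsto -ka$, and log-sum-exp is convex. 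Since the composition of a concave function with an affine map remains concave, the left-hand side of each constraint in \eqref{eq:finalmin} is concave in the full vector of optimization variables.

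The CCP step replaces each such concave left-hand side with its first-order Taylor expansion at the current iterate, which is a global affine upper bound of a concave function. After this majorization, each constraint in \eqref{eq:finalmin} becomes an affine inequality, while \eqref{eq:final_xuu}--\eqref{eq:finalmax} were already affine and the cost $s_\xi$ is linear. Hence the majorized subproblem has linear cost subject only to affine equality and inequality constraints, i.e. an LP. Adding any convex quadratic term in the optimization variables to \eqref{eq:final_xx} (for instance an input-energy penalty $\sum_t u_t^{\mathsf{T}}Ru_t$ with $R\succeq 0$) preserves the polyhedral feasible set while making the objective a convex quadratic, yielding a convex QP. The main obstacle is not any deep argument but careful bookkeeping: one must confirm that the slack variables enter \eqref{eq:finalmin} only through the arguments of $\overline{\min}_k$, so that the composition stays concave in all decision variables and the first-order majorant is genuinely affine rather than merely concave.
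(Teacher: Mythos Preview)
Your proof is correct and follows the same route as the paper's own argument: verify that the cost and all constraints of \eqref{eq:final} except those in \eqref{eq:finalmin} are affine, observe that the left-hand sides in \eqref{eq:finalmin} are concave (as compositions of the concave $\overline{\min}_k$ with affine arguments), and conclude that the CCP linearization yields an LP (or a convex QP if a quadratic cost is added). Your version is simply more explicit than the paper's, spelling out the concavity of $\overline{\min}_k$ via the log-sum-exp representation and noting that the slack variables entering \eqref{eq:finalmin} keep the composition affine-in-arguments.
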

\begin{proof}
In the program \eqref{eq:final}, the cost function is $s_\xi$ and the constraints including \eqref{eq:finalmax} are all affine except for concave constraints \eqref{eq:mellowmin}. As CCP linearizes all the concave constraints, the program resulting from the majorization of \eqref{eq:final} becomes an LP.
\end{proof}
Let $\mathcal{P}_{\text{LP}}$ denote this LP in the following.
It is worth mentioning that inequalities in \eqref{eq:finalmax} are all affine inequalities as we remove all the $\max$ operators in the end. As a corollary of the above theorem, our CCP-based method sequentially solves LP. If we add the quadratic cost function in \eqref{eq:final_xx}, our method becomes sequential quadratic programming (SQP). 

Moreover, as all the constraints coming from the robustness decomposition framework in Section \ref{section:decomposition} do not have any equality constraints, the non-convex parts of \eqref{eq:final} are only the inequality constraints \eqref{eq:finalmin}, which comes from $\min$ functions of the robustness function.

\begin{prop}\label{theo:convex}
The program \eqref{eq:final} has no equality constraints except for the state equations \eqref{eq:final_xuu}, which are affine constraints. 
\end{prop}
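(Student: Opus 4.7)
The plan is to verify, by inspection of the final decomposed program \eqref{eq:final}, that every step of the robustness decomposition procedure in Section \ref{section:decomposition} introduces only inequality constraints, so that the state dynamics \eqref{eq:final_xuu} remain the unique source of equality constraints.

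First I would trace the epigraphic transformation from \eqref{eq:moto} to \eqref{eq:max_trans}. Although the natural semantics of the slack $s_\xi$ is the equality \eqref{eq:eq}, the relaxation to the inequality set \eqref{eq:xl} together with the bound $s_\xi\leq 0$ preserves both feasibility and optimum, as made precise in Appendix \ref{app:maxtran}. Thus only inequalities are introduced at this stage. The same observation then applies to the two recursive cases used to drive the decomposition downward in the STL tree: the rewriting \eqref{eq:max_after} is the exact logical equivalence $\max(a_1,\ldots,a_{y_{\max}})\leq s \Longleftrightarrow \bigwedge_{i}(a_i\leq s)$, which adds only inequalities; while \eqref{eq:ii_minmax} replaces an argument of $\overline{\min}$ by a new slack $s_{\text{new}}$ and appends an inequality constraint whose natural equality form is justified globally by Theorem \ref{theo:global}. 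Since the recursion terminates once all arguments are affine predicates, no further constraints are produced.

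To conclude, I would simply read off the final form \eqref{eq:final}: the cost \eqref{eq:final_xx} is linear; the set constraints \eqref{eq:final_xu} are polyhedral by the standing assumption that $\mathcal{X},\mathcal{U}$ are conjunctions of polyhedra; \eqref{eq:final_0}, \eqref{eq:finalmax}, and \eqref{eq:finalmin} are inequalities by construction; and \eqref{eq:final_xuu} is an affine equality coming directly from the linear dynamics \eqref{eq:system}. Hence \eqref{eq:final_xuu} is indeed the only set of equality constraints, and it is affine. The main obstacle is entirely bookkeeping rather than technical: one must be certain that every equality-to-inequality relaxation performed during the decomposition preserves the optimum. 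That justification, however, is exactly what Theorem \ref{theo:global} provides, so once it is cited the proposition follows by direct inspection of \eqref{eq:final}.
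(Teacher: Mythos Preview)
Your proposal is correct and matches the paper's approach: the paper offers no explicit proof for this proposition, treating it as immediate from the construction of \eqref{eq:final} in Section~\ref{section:decomposition}. Your argument by inspection of each decomposition step (epigraphic reformulation, the $\max$ split \eqref{eq:max_after}, and the slack-for-$\overline{\min}$-argument rule \eqref{eq:ii_minmax}) is precisely the bookkeeping that underlies the paper's unstated justification.
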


Hence, the CCP only approximates the \textit{truly} concave portions of \eqref{eq:finalmin} arising from the $\min$ functions in the robustness function. It is noteworthy that enforcing the constraints, such as \eqref{eq:xl} and \eqref{eq:ii_minmax}, as equality constraints would require the CCP to approximate them as described in Subsection \ref{subsec:ccp}, thereby leading to undesirable approximations. In contrast, the proposed method minimizes the number of parts requiring approximation, which enhances the algorithm's performance.

Furthermore, the feasible region of the program $\mathcal{P}_{\text{LP}}$ is interior to the feasible region of the program \eqref{eq:final}.

\begin{prop}\label{prop:feasible}
A feasible solution of the linear program $\mathcal{P}_{\text{LP}}$ is a feasible solution to the program \eqref{eq:final}.
\end{prop}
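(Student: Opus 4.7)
The plan is to split the feasibility check into two parts. First, the constraints of \eqref{eq:final} that are already affine in the decision variables $(\boldsymbol{x}, \boldsymbol{u}, s_\xi, \boldsymbol{s}_{\max}, \boldsymbol{s}_{\min})$, namely \eqref{eq:final_xuu}--\eqref{eq:finalmax}, are inherited verbatim by $\mathcal{P}_{\text{LP}}$, so a feasible solution of $\mathcal{P}_{\text{LP}}$ satisfies them automatically. Affineness here relies on the facts that the predicates have the form $a^{\mathsf{T}} x_t - b$ and that the $\max$ operators were already decomposed into multiple linear inequalities in step \eqref{eq:max_after}, so no trace of nonsmoothness remains in \eqref{eq:finalmax}. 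The only nontrivial work is for the constraints \eqref{eq:finalmin}, which are the ones CCP actually majorizes.

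For each constraint $\overline{\min}(\overline{\rho}_{\text{rev}}^{\mu_1},\ldots,\overline{\rho}_{\text{rev}}^{\mu_{y_{\min}}}) \leq s_{\min}$, I would first observe that the left-hand side is a concave function of $\boldsymbol{x}$: the outer map $\overline{\min}_k(a_1,\ldots,a_r) = -\tfrac{1}{k}\ln\sum_i e^{-k a_i}$ is the negation (scaled by $1/k>0$) of a log-sum-exp applied to an affine image of $(a_1,\ldots,a_r)$, hence concave, and each argument $\overline{\rho}_{\text{rev}}^{\mu_j}$ is affine in $\boldsymbol{x}$ because the underlying predicates are affine. Denoting the CCP linearization of this concave function at the current iterate $\boldsymbol{x}^{(k)}$ by $\widetilde{m}(\boldsymbol{x};\boldsymbol{x}^{(k)})$, the first-order inequality for concave differentiable functions yields $\overline{\min}(\overline{\rho}_{\text{rev}}^{\mu_1},\ldots,\overline{\rho}_{\text{rev}}^{\mu_{y_{\min}}}) \leq \widetilde{m}(\boldsymbol{x};\boldsymbol{x}^{(k)})$ for every $\boldsymbol{x}$. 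Since $\mathcal{P}_{\text{LP}}$ imposes $\widetilde{m}(\boldsymbol{x};\boldsymbol{x}^{(k)}) \leq s_{\min}$ in place of the original nonconvex constraint, chaining the two inequalities recovers $\overline{\min}(\cdots) \leq s_{\min}$, which is exactly \eqref{eq:finalmin}.

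The only step that needs care is verifying that CCP's majorization really supplies an upper bound on the concave left-hand side (as opposed to a lower bound). I would handle this by fitting \eqref{eq:finalmin} into the DC template $f(\boldsymbol{z}) - g(\boldsymbol{z}) \leq 0$ with $f(\boldsymbol{z}) = -s_{\min}$ affine and $g(\boldsymbol{z}) = -\overline{\min}(\overline{\rho}_{\text{rev}}^{\mu_1},\ldots,\overline{\rho}_{\text{rev}}^{\mu_{y_{\min}}})$ convex; CCP (Subsection~\ref{subsec:ccp}) replaces $g$ by its first-order Taylor expansion at $\boldsymbol{z}^{(k)}$, and convexity of $g$ guarantees that this expansion lies below $g$ globally, so the induced approximation of $-g = \overline{\min}(\cdots)$ lies above it globally. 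This fixes the correct direction of the bound, and combining this with the inheritance of the affine constraints gives the proposition.
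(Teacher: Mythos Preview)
Your proposal is correct and follows essentially the same approach as the paper's proof: the paper gives a one-line argument that the CCP first-order approximation of the concave part is a global over-estimator, i.e., $f_i(\boldsymbol{z})-g_i(\boldsymbol{z}) \le f_i(\boldsymbol{z})-g_i(\boldsymbol{z}_{(\cdot)})+\nabla g_i(\boldsymbol{z}_{(\cdot)})^{\top}(\boldsymbol{z}_{(\cdot)}-\boldsymbol{z})$, which immediately yields the inclusion of feasible sets. Your write-up is simply a more detailed unpacking of the same idea (separating the inherited affine constraints, verifying concavity of $\overline{\min}_k$, and fitting \eqref{eq:finalmin} into the DC template to fix the sign of the bound); one minor refinement is that some arguments of $\overline{\min}$ in \eqref{eq:finalmin} are slack variables rather than predicates in $\boldsymbol{x}$, so the concavity should be stated with respect to the full decision vector $\boldsymbol{z}$, but this does not affect the argument.
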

\begin{proof}
This is because the first-order approximations of CCP at each step are global over-estimators; i.e., $f_i(\boldsymbol{z})- g_i(\boldsymbol{z})\leq f_i(\boldsymbol{z})-g_i(\boldsymbol{z}_{(\cdot)})+ \nabla g_i(\boldsymbol{z}_{(\cdot)})^{\top}(\boldsymbol{z}_{(\cdot)}-\boldsymbol{z})$ where $\boldsymbol{z}_{(\cdot)}$ is the current point of variable $\boldsymbol{z}$.
\end{proof}

Finally, as a direct consequence of Propositions \ref{lem:robust_vs_slack} and \ref{prop:feasible}, the solution obtained by our CCP-based SQP method can always satisfy the specification, which is the most important property to guarantee safety.

\begin{theo}
Let $\boldsymbol{x}''$ denote a feasible trajectory of the program $\mathcal{P}_{\text{LP}}$. Then $\boldsymbol{x}''$ always satisfies the STL specification $\varphi$, i.e., $\boldsymbol{x}'' \vDash \varphi$ in the limit $k\rightarrow\infty$.
\end{theo}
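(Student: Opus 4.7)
The proof reduces to a chain of the two preceding propositions followed by a limit argument. I would first apply Proposition \ref{prop:feasible} to lift $\boldsymbol{x}''$, together with its accompanying slack variables $s_\xi, \boldsymbol{s}_{\max}, \boldsymbol{s}_{\min}$, from a feasible point of the linear program $\mathcal{P}_{\text{LP}}$ to a feasible point of the DC program \eqref{eq:final}: the CCP linearizations of the concave constraints \eqref{eq:finalmin} are global over-estimators, so every LP constraint implies the corresponding constraint of \eqref{eq:final}. Having established feasibility for \eqref{eq:final}, Proposition \ref{lem:robust_vs_slack} then immediately gives $\overline{\rho}_{\text{rev}}^{\varphi}(\boldsymbol{x}'') \leq s_\xi \leq 0$.

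The remaining task is to translate this bound on the smoothed reversed robustness into one on the exact robustness $\rho_{\text{rev}}^{\varphi}$. From \eqref{LSEmin} a direct calculation yields $\overline{\min}_k(a_1,\dots,a_r) \leq \min(a_1,\dots,a_r)$ with error bounded by $(\ln r)/k$, so $\overline{\min}_k \to \min$ uniformly on any bounded set as $k \to \infty$. Since the reversed-robustness tree of Definition \ref{def:revrobustness} is built solely by composition of the continuous, monotone nondecreasing operators $\max$ and $\overline{\min}_k$ applied to affine predicates, a straightforward induction on the structure of $\varphi$ shows that $\overline{\rho}_{\text{rev}}^{\varphi}(\boldsymbol{x}) \to \rho_{\text{rev}}^{\varphi}(\boldsymbol{x})$ uniformly over the compact trajectory set enforced by $\mathcal{X}$ and $\mathcal{U}$ in \eqref{eq:final_xu}.

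Passing to the limit $k \to \infty$ in $\overline{\rho}_{\text{rev}}^{\varphi}(\boldsymbol{x}'') \leq 0$ (along any convergent subsequence of feasible trajectories, if $\boldsymbol{x}''$ depends on $k$, which exists by the compactness of $\mathcal{X}\times\mathcal{U}$) therefore yields $\rho_{\text{rev}}^{\varphi}(\boldsymbol{x}'') \leq 0$. Using the identity $\rho_{\text{rev}}^{\varphi} = -\rho_{\text{orig}}^{\varphi}$ noted after Definition \ref{def:revrobustness}, this is exactly $\rho_{\text{orig}}^{\varphi}(\boldsymbol{x}'') \geq 0$, which by the standard soundness of the STL quantitative semantics gives $\boldsymbol{x}'' \vDash \varphi$.

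The main technical obstacle is the inductive uniform-convergence argument $\overline{\rho}_{\text{rev}}^{\varphi} \to \rho_{\text{rev}}^{\varphi}$, but this is essentially a classical log-sum-exp approximation bound propagated through the (finite-depth) robustness tree and is exactly the property already invoked implicitly by the excerpt when arguing that the smoothed formulation \eqref{eq:moto} approximates the original problem arbitrarily well for large $k$; the rest of the proof is pure algebraic chaining.
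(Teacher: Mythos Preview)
Your proposal is correct and follows essentially the same approach as the paper: chain Proposition~\ref{prop:feasible} and Proposition~\ref{lem:robust_vs_slack} to obtain $\overline{\rho}_{\text{rev}}^{\varphi}(\boldsymbol{x}'')\leq 0$, then pass to the limit $k\to\infty$ so that $\overline{\rho}_{\text{rev}}^{\varphi}\to\rho_{\text{rev}}^{\varphi}$ and conclude. The paper's proof simply asserts this limiting equivalence in one line, whereas you supply the explicit $(\ln r)/k$ LSE bound and the inductive uniform-convergence argument over the robustness tree, but the logical skeleton is identical.
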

\begin{proof}
Based on Propositions \ref{lem:robust_vs_slack} and \ref{prop:feasible}, $\overline{\rho}_{\text{rev}}^{\varphi}(\boldsymbol{x}'')\leq 0$ holds. The limit as $k\rightarrow\infty$ results in the smoothed reversed-robustness $\overline{\rho}_{\text{rev}}^{\varphi}(\boldsymbol{x}'')$ being equivalent to the original reversed-robustness $\rho_{\text{rev}}^{\varphi}(\boldsymbol{x}'')$. Therefore, $\rho_{\text{rev}}^{\varphi}(\boldsymbol{x}'')\leq 0$ as $k\rightarrow\infty$, which proves the statement.
\end{proof}
In practice, we select a sufficiently large $k$ to minimize the approximation error.

\section{Numerical Experiments}\label{section:example}
We illustrate the advantages of our proposed method through the two-target scenarios with varying time horizons ranging from $T=50$ to $140$.
All experiments were conducted on a 2020 MacBook Air with an Apple M1 processor and 8 GB of RAM\footnote{The source code is available at \url{https://github.com/yotakayama/STLCCP}.}. The state $x_t$ and input $u_t$ are defined as $
x_t=\left[p_x, p_y, \dot{p}_x, \dot{p}_y\right]^\mathsf{T},  u_t=\left[\ddot{p}_x, \ddot{p}_y\right]^\mathsf{T}
$, where $p_x$ is the horizontal position of the robot and $p_y$ is the vertical position. As for the system dynamics, we use a double integrator, i.e., a system \eqref{eq:system} with the matrix
\begin{align}
A&=\left[\begin{array}{ll}
I_2 & I_2 \\
0_{2\times 2} & I_2
\end{array}\right], \quad B=\left[\begin{array}{c}
0_{2\times 2} \\
I_2
\end{array}\right].
\end{align}

We implemented our algorithm in Python using CVXPY \cite{Diamond2016-ka} as the interface to the optimizer.
We chose the GUROBI(ver. 10)'s QP-solver with the default option. We took the penalty CCP \cite{Shen2016-sn} to solve the problem, which can admit the violation of constraints during the optimization procedures by penalizing the violation with variables. We introduced this penalty variable only for the concave constraints because the CCP only approximates such concave parts. Additionally, we added a small quadratic cost function. The weights on the penalty variables and the quadratic cost functions were 50.0 and 0.001, respectively. The resulting cost function \eqref{eq:final_xx} is represented as: 
\begin{equation}
s_\xi+50.0\sum_{i=0}^{w}(s_{i})+0.001\sum_{t=0}^T (x_t^\mathsf{T} Q x_t+u_t^\mathsf{T} R u_t),
\end{equation}
where $Q=\operatorname{diag}(0,0,1,1), R=I_2$ are positive semidefinite symmetric matrices, and $s_i$ for $i\in\{1,...,w\}$ ($w$ is the number of constraints in \eqref{eq:finalmin}) is the penalty variable added to the r.h.s. of each concave constraints \eqref{eq:finalmin} satisfying $s_i\leq 0$. The initial state is fixed as $x_0=[2.0,2.0,0,0]$, and the bounds on the state and input variables are $\mathcal{X}=[0.0, 0.0, -1.0, -1.0]^{\mathsf{T}}<x_t<[10.0, 10.0, 1.0, 1.0]^{\mathsf{T}}$ and $\mathcal{U}=[-0.2,-0.2]^{\mathsf{T}}<u_t<[0.2, 0.2]^{\mathsf{T}}$, respectively.

We compared our method with a popular MIP-based method and a state-of-the-art NLP-based method. For the MIP-based approach, we formulated the problem as a MIP using the encoding framework in \cite{Belta_undated-jj} and solved the problem with the GUROBI solver (GUROBI is often the fastest MIP solver). 
However, for the NLP-based method, we used the naive sequential quadratic programming (SQP) approach \cite{GilMS05} using the smoothing method proposed in \cite{Gilpin2021-wv}. All the parameters of the two methods were the default settings. 

Figure \ref{fig:three} compares convergence time and robustness score of the three methods from $T=50$ to $140$. The SNOPT-NLP method (orange) failed to output a feasible trajectory at 3 horizons ($T=55,90,95$) out of the 19 horizons, which are shown by the orange cross marker $\color{orange} \times$ in the left plot, and their robustness scores have been cut off in the right plot as they are $-\infty$.
In contrast, the proposed method (blue) failed only 2 times out of the 95 total trials (5 trials at each horizon). The GUROBI-MIP method (green) is truncated at the horizon $T=100$ due to the computation time $7371.3$ s, which exceeds the range of the plotted region. Therefore, subsequent results for longer horizons are not displayed for the sake of visibility. The plots of our method are the averages of different trials varying the initial values of variables, whereas the other two plots are from one trial.
All the scores in the right plot are calculated using the original robustness function.

In summary, our proposed method outperformed the state-of-the-art SNOPT-NLP method with respect to both the robustness score and the computation time. Specifically, as shown in the left plot, the computation time of our proposed method remained at a consistently low level even when the horizon increased. In contrast, the other two methods displayed volatility or were infeasibly slow, particularly above $T=100$. Furthermore, the right plot demonstrates that our proposed method did not compromise the robustness score to shorten the computation time. On the contrary, our method consistently achieved a higher robustness score than the SNOPT-NLP method. At some horizons, our method achieved a score of $0.5$, with the global optimum obtained by the MIP-based method. Notably, the plot of our method represents the average of all the trials, indicating that the feasible trajectories our method outputs are all the global optimum. In contrast, the SNOPT-NLP method sometimes generated failure trajectories with a robustness score of $-\infty$ in some horizons (at $T=55,90,95$), and all of the robustness scores (orange) were lower than those of the proposed method (blue).

Finally, Figure \ref{fig:traj} shows the trajectories generated by the proposed method with $T=50$. Our method produced a range of satisfactory trajectories, each dependent on the initial values of the system's variables.

\begin{figure}[H]
\vspace{-0.4cm}
  \centering
 \mbox{\raisebox{0mm}{\includegraphics[keepaspectratio, scale=0.45]{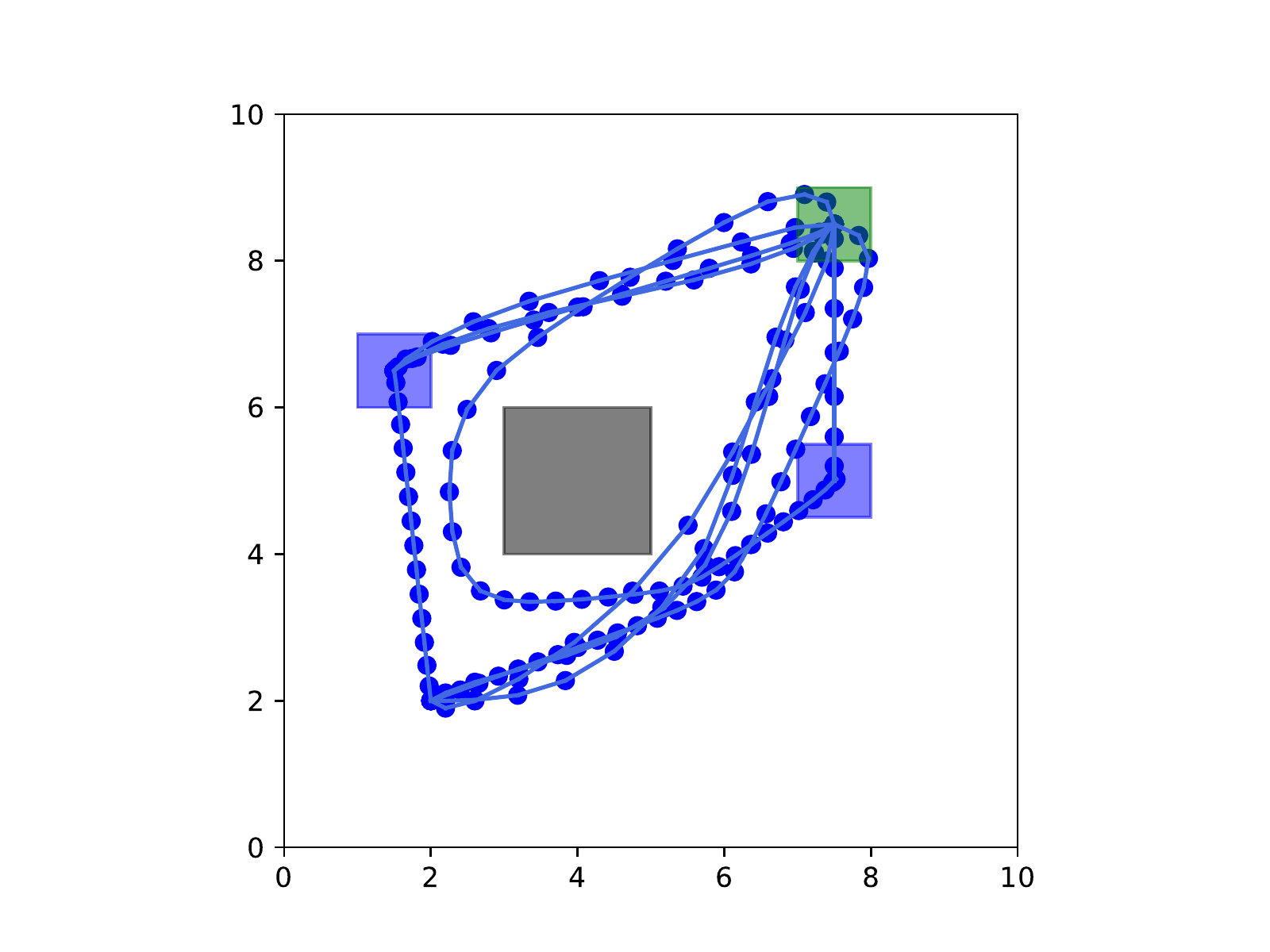}}}
     \caption{Example of trajectories generated by the proposed method with 5 random initial values for $T=50$. }\label{fig:traj}
 \end{figure}

\begin{comment}
\section{Conclusion}\label{section:conclusion}

In this study, we proposed a new efficient CCP-based algorithm for control problems with STL specifications exploiting the structures of STL. The core of our method was the decomposition of the robustness function that can transform the program into an efficient form of difference of convex (DC) programs. As a result, the CCP linearizes only the concave constraints, which correspond to the disjunctive nodes of the STL tree in our formulation. Then, our method resulted in an efficient SQP that solves only quadratic programs at each iteration, contrary to a general convex program that the existing literature on CCP focuses on.

Future work includes extending our method to non-affine predicates and nonlinear systems. Although we restricted ourselves to linear systems in this paper to create an efficient SQP method, one could apply the same approach to non-linear systems. Additionally, a direct adaptation of our CCP-based approach to the non-smooth case is also an area for future research.
\end{comment}

\section{Conclusion}\label{section:conclusion}
In this study, we have introduced the CCP-based SQP algorithm for control problems involving STL specifications leveraging the inherent structures of STL. The subproblem of our proposed method is an efficient quadratic subprogram, achieved by approximating solely the concave constraints of the program that correspond to the disjunctive nodes of the STL tree. 
%while the resulting subprogram is a QP, rather than more computationally demanding general convex program that the existing literature on CCP focuses on.

Future work includes extending the proposed method to non-affine predicates, nonlinear systems, and non-smooth cases. While this paper focused on linear systems to establish an efficient SQP approach, the same methodology can be applied to tackle more complex cases.

\bibliographystyle{IEEETran}
%\bibliography{library}
% Generated by IEEEtran.bst, version: 1.14 (2015/08/26)

\appendix
\subsection{Equivalence between the two programs: \eqref{eq:moto} and \eqref{eq:max_trans} }\label{app:maxtran}
We prove the following equivalences.
\begin{prop}\label{prop:maxtrans}
Both formulations \eqref{eq:moto} and \eqref{eq:max_trans} are equivalent in the sense that the global optimum is the same.
\end{prop}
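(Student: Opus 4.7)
The plan is to establish that the two programs share the same global optimum by exhibiting mutual feasibility maps that preserve (or weakly improve) the objective. Let $p^\star$ and $q^\star$ denote the optimal values of \eqref{eq:moto} and \eqref{eq:max_trans}, respectively. Recall that by assumption the outermost operator is $\max$, so $\overline{\rho}_{\text{rev}}^{\varphi}(\boldsymbol{x}) = \max_{i \in \{1,\ldots,r\}} \overline{\rho}_{\text{rev}}^{\varphi_i}(\boldsymbol{x})$.

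First I would show $q^\star \leq p^\star$. Take any $(\boldsymbol{x},\boldsymbol{u})$ feasible for \eqref{eq:moto}, and lift it to $(\boldsymbol{x},\boldsymbol{u},s_\xi)$ by setting $s_\xi := \overline{\rho}_{\text{rev}}^{\varphi}(\boldsymbol{x})$. The dynamic constraints \eqref{eq:x} and domain constraints \eqref{eq:xi_} are identical. The inequality $s_\xi \leq 0$ in \eqref{eq:xi_0} follows from \eqref{eq:min_xuu1}, and the block of inequalities \eqref{eq:xl} holds by definition of $s_\xi$ as the pointwise maximum. The lifted point is therefore feasible for \eqref{eq:max_trans} with cost $s_\xi = \overline{\rho}_{\text{rev}}^{\varphi}(\boldsymbol{x})$, which matches the original cost.

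Next I would show $p^\star \leq q^\star$. Take any $(\boldsymbol{x},\boldsymbol{u},s_\xi)$ feasible for \eqref{eq:max_trans}. The constraints \eqref{eq:xl} give
\begin{equation*}
\overline{\rho}_{\text{rev}}^{\varphi}(\boldsymbol{x}) \;=\; \max_{i}\overline{\rho}_{\text{rev}}^{\varphi_i}(\boldsymbol{x}) \;\leq\; s_\xi \;\leq\; 0,
\end{equation*}
so constraint \eqref{eq:min_xuu1} holds; together with \eqref{eq:x} and \eqref{eq:xi_}, $(\boldsymbol{x},\boldsymbol{u})$ is feasible for \eqref{eq:moto} with cost $\overline{\rho}_{\text{rev}}^{\varphi}(\boldsymbol{x}) \leq s_\xi$. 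Taking the infimum over all feasible points of \eqref{eq:max_trans} gives $p^\star \leq q^\star$. Combining both inequalities yields $p^\star = q^\star$, which is the claim.

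The only subtle point, and what I would flag as the conceptual obstacle, is that the ``natural'' semantics of $s_\xi$ is the equality $s_\xi = \max_i \overline{\rho}_{\text{rev}}^{\varphi_i}(\boldsymbol{x})$, while \eqref{eq:max_trans} uses only the one-sided inequality in \eqref{eq:xl}. The relaxation does not lose optimality because $s_\xi$ appears alone in the objective and is bounded from below in the constraints only by the $\overline{\rho}_{\text{rev}}^{\varphi_i}(\boldsymbol{x})$'s; at any optimum $s_\xi$ tightens to $\max_i \overline{\rho}_{\text{rev}}^{\varphi_i}(\boldsymbol{x})$, as made quantitative by Step~2 above. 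This is precisely the observation that allows the reformulation to avoid introducing equality constraints, which (as noted in Section~\ref{subsec:ccp}) would otherwise force the CCP step to linearize the convex side as well.
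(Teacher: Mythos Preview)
Your proof is correct and follows essentially the same approach as the paper's: both arguments use the lifting map $(\boldsymbol{x},\boldsymbol{u})\mapsto(\boldsymbol{x},\boldsymbol{u},\overline{\rho}_{\text{rev}}^{\varphi}(\boldsymbol{x}))$ together with the observation that \eqref{eq:xl} forces $\overline{\rho}_{\text{rev}}^{\varphi}(\boldsymbol{x})\leq s_\xi$. The only cosmetic difference is that the paper proves optimality of the lifted point by contradiction, whereas you establish $p^\star\leq q^\star$ and $q^\star\leq p^\star$ directly; the substantive content is the same.
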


\begin{proof}
We demonstrate that if $(\boldsymbol{x}^*,\boldsymbol{u}^*)$ is optimal for \eqref{eq:moto}, then $(\boldsymbol{x}^*, \boldsymbol{u}^*,s_\xi^*=\overline{\rho}_{\text{rev}}^{\varphi}\left(\boldsymbol{x}^*\right))$ is optimal for \eqref{eq:max_trans}. Firstly, note that for any feasible solution $(\boldsymbol{x},\boldsymbol{u})$ of \eqref{eq:moto}, $(\boldsymbol{x},\boldsymbol{u},s_\xi=\overline{\rho}_{\text{rev}}^{\varphi}(\boldsymbol{x}))$ is feasible for \eqref{eq:max_trans} as \eqref{eq:xl} becomes an identity by this substitution $s_\xi=\overline{\rho}_{\text{rev}}^{\varphi}\left(\boldsymbol{x}\right)$. Thus, $(\boldsymbol{x}^*, \boldsymbol{u}^*,s_\xi^*=\overline{\rho}_{\text{rev}}^{\varphi}\left(\boldsymbol{x}^*\right))$ is also feasible for \eqref{eq:max_trans}. 

Suppose this solution $(\boldsymbol{x}^*, \boldsymbol{u}^*,s_\xi^*=\overline{\rho}_{\text{rev}}^{\varphi}\left(\boldsymbol{x}^*\right))$ is not optimal for \eqref{eq:max_trans}. Then, there exists another feasible solution $(\mathsf{x}^*,\mathsf{u}^*, \mathsf{s}^*_\xi)$ with a better objective, i.e., such that $\mathsf{s}^*_\xi<s_\xi^*=\overline{\rho}_{\text{rev}}^{\varphi}(\boldsymbol{x}^*)$. Since this feasible solution also satisfies \eqref{eq:xl}, that is, $\overline{\rho}_{\text{rev}}^{\varphi}\left(\mathsf{x}^*\right) \leq \mathsf{s}^*_\xi$, the inequality $\overline{\rho}_{\text{rev}}^{\varphi}\left(\mathsf{x}^*\right) < \overline{\rho}_{\text{rev}}^{\varphi}\left(\boldsymbol{x}^*\right)$ holds. Considering that $(\mathsf{x}^*,\mathsf{u}^*)$ is feasible for \eqref{eq:moto} (as one of the inequality conditions
in \eqref{eq:xl} must become the equality condition when the cost function $s_\xi$ takes a local minimum), the inequality $\overline{\rho}_{\text{rev}}^{\varphi}\left(\mathsf{x}^*\right) < \overline{\rho}_{\text{rev}}^{\varphi}\left(\boldsymbol{x}^*\right)$ contradicts the optimality of $\boldsymbol{x}^*$ for \eqref{eq:moto}. Therefore, $(\boldsymbol{x}^*, \boldsymbol{u}^*,\boldsymbol{s}_\xi^*=\overline{\rho}_{\text{rev}}^{\varphi}\left(\boldsymbol{x}^*\right))$ is optimal for \eqref{eq:max_trans}, and the global optimum is the same. 
\end{proof}

%Firstly, we clarify that if \eqref{eq:ii_minmax} becomes the equality, then the resulting program would be equivalent to \eqref{eq:moto} and the cost value $s'_\xi$ would be equal to the value of the robustness score, i.e., $s'_\xi=\overline{\rho}_{\text{rev}}^{\varphi}(\boldsymbol{x}')$. This is because other transformations made through the transformation from \eqref{eq:moto} to \eqref{eq:final} are equivalent. However, as the transformation with new variable $s_{\text{new}}$ were made with \textit{inequality} \eqref{eq:ii_minmax}, the equivalence might not hold; nevertheless, the inequality $\overline{\rho}_{\text{rev}}^{\varphi}(\boldsymbol{x}')\leq s'_\xi$ must hold. 
\subsection{Proof of Proposition \ref{lem:robust_vs_slack}}\label{app:robust_s_xi}
\begin{proof} 
We present a proof sketch. Let us first consider the initial replacement of the robustness function $\overline{\rho}_{\text{rev}}^{\Psi_i}(\boldsymbol{x}')$ with a new variable $s_{\text{new}}$. This transformation converts inequality \eqref{eq:mellowmin} into two inequalities, \eqref{eq:min_slack} and \eqref{eq:ii_minmax}. Note that \eqref{eq:ii_minmax} represents the inequality 
\begin{equation}\label{eq:childnode}
\overline{\rho}_{\text{rev}}^{\Psi_i}(\boldsymbol{x}')\leq s_{\text{new}},
\end{equation}
although this inequality sign should be an equality if $s_{\text{new}}$ were a substituting variable. Nevertheless, due to the fact that $\overline{\min}$ is an increasing function with respect to each argument, we have the inequality 
$$
\hspace{-0.15cm}\overline{\min}(\overline{\rho}_{\text{rev}}^{\Psi_1},...,\overline{\rho}_{\text{rev}}^{\Psi_i},...,\overline{\rho}_{\text{rev}}^{\Psi_{y_{\min}}}) \leq \overline{\min}(\overline{\rho}_{\text{rev}}^{\Psi_1},...,s_{\text{new}},...,\overline{\rho}_{\text{rev}}^{\Psi_{y_{\min}}}),
$$
Combining this inequality with \eqref{eq:min_slack}, we conclude that inequality \eqref{eq:mellowmin} holds even after the transformation with $s_{\text{new}}$. Similar analogies can be used for all subsequent transformations of non-convex robustness functions with new variables that follow the same pattern. Thus, ultimately, the original inequalities in \eqref{eq:xl} hold. By this inequality and \eqref{eq:final_0}, we have $\overline{\rho}_{\text{rev}}^{\varphi}(\boldsymbol{x}')\leq 0$.
\end{proof}

From a tree perspective, this proof demonstrates that by replacing the robustness of any subformula with a new variable that is greater than or equal to the original value (as shown in \eqref{eq:childnode}), the parent node's robustness score, due to the monotonicity of the $\max$ and $\min$ functions, also becomes greater than or equal to the original value (as shown in \eqref{eq:mellowmin}). This property extends further to the values of the grandparents and subsequent nodes, ultimately resulting in the cost function $s_{\text{new}}$, representing the top-node value, being greater than the robustness value of $\varphi$.

%If \eqref{eq:ii_minmax} becomes the equality, then the resulting program would be equivalent to \eqref{eq:moto} and the cost value $s'_\xi$ would be equal to the value of the robustness score, i.e., $\overline{\rho}_{\text{rev}}^{\varphi}(\boldsymbol{x}')=s'_\xi$. 
%The modification of robustness function $\overline{\rho}_{\text{rev}}^{\Psi_i}$ with new variable $s_{\text{new}}$ in \eqref{eq:min_slack} and \eqref{eq:ii_minmax} results in a robustness value of a node being greater than the robustness value calculated from the children nodes in the tree viewpoint. This also makes the values of the parent nodes higher, as well as the grandparents, and so on. Consequently, the inequality $\overline{\rho}_{\text{rev}}^{\varphi}(\boldsymbol{x}')\leq s'_\xi$ holds.

\subsection{Proof of Theorem \ref{theo:global}}\label{app:global}
We first prove the following proposition, which states the converse of Proposition \ref{lem:rev_feasible}:
\begin{prop}\label{lem:forward_feasible}
Let $\Phi^{(i)}_{\text{parent}}$ denotes the parent node of formulas $\Phi^{(i)}_1,...,\Phi^{(i)}_{y_{\max}}$ for $i\in\{1,...,v\}$, and $\Psi^{(i)}_{\text{parent}}$ denotes the parent node of formulas $\Psi^{(i)}_1,...,\Psi^{(i)}_{y_{\max}}$ for $i\in\{1,...,w\}$. Let $(\boldsymbol{x}^*,\boldsymbol{u}^*)$ denote a feasible solution of \eqref{eq:moto}, then the following solution $\boldsymbol{z}^*$ is feasible for \eqref{eq:final}.
\begin{align}\label{z_star}
&\boldsymbol{z}^*:=(\boldsymbol{x}^*,\boldsymbol{u}^*,s_\xi^*=\overline{\rho}_{\text{rev}}^{\varphi}(\boldsymbol{x}^*),
\boldsymbol{s}_{\max}^*=(\overline{\rho}_{\text{rev}}^{\Phi^{(1)}_{\text{parent}}}(\boldsymbol{x}^*),\nonumber \\ 
&...,\overline{\rho}_{\text{rev}}^{\Phi^{(v)}_{\text{parent}}}(\boldsymbol{x}^*)),\boldsymbol{s}_{\min}^*=(\overline{\rho}_{\text{rev}}^{\Psi^{(1)}_{\text{parent}}}(\boldsymbol{x}^*),...,\overline{\rho}_{\text{rev}}^{\Psi^{(w)}_{\text{parent}}}(\boldsymbol{x}^*)))
\end{align}
\end{prop}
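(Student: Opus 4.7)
The plan is to verify, constraint by constraint, that the candidate $\boldsymbol{z}^*$ defined in \eqref{z_star} satisfies every constraint of \eqref{eq:final}. The dynamics \eqref{eq:final_xuu} and the state/input bounds \eqref{eq:final_xu} are inherited directly from the feasibility of $(\boldsymbol{x}^*,\boldsymbol{u}^*)$ for \eqref{eq:moto}. The sign constraint \eqref{eq:final_0}, $s_\xi^*\leq 0$, follows immediately because $s_\xi^*$ was chosen to equal $\overline{\rho}_{\text{rev}}^{\varphi}(\boldsymbol{x}^*)$ and the feasibility condition \eqref{eq:min_xuu1} already guarantees $\overline{\rho}_{\text{rev}}^{\varphi}(\boldsymbol{x}^*)\leq 0$. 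So the real work is verifying the constraint families \eqref{eq:finalmax} and \eqref{eq:finalmin}.

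The natural tool is an induction on the STL tree that mirrors the recursive decomposition of Section \ref{section:decomposition}. The inductive claim is: for every intermediate node $\eta$ to which the decomposition attaches a slack $s_\eta$, the assignment $s_\eta^{*}=\overline{\rho}_{\text{rev}}^{\eta}(\boldsymbol{x}^*)$ causes every constraint introduced at $\eta$ to be satisfied. For a $\overline{\min}$-type parent $\Psi^{(i)}_{\text{parent}}$, the associated constraint in \eqref{eq:finalmin} reads $\overline{\min}(\overline{\rho}_{\text{rev}}^{\mu^{(i)}_1}(\boldsymbol{x}^*),\ldots,\overline{\rho}_{\text{rev}}^{\mu^{(i)}_{y_{\min}}}(\boldsymbol{x}^*))\leq s^{(i)}_{\min}$; by the inductive hypothesis each non-predicate argument is replaced by a slack carrying the true robustness of the subformula it stands for, so the left-hand side equals $\overline{\rho}_{\text{rev}}^{\Psi^{(i)}_{\text{parent}}}(\boldsymbol{x}^*)=s^{(i)}_{\min}$ and the inequality holds with equality. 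For a $\max$-type parent $\Phi^{(i)}_{\text{parent}}$, the decomposition splits max operators without introducing new slacks, and the resulting constraints in \eqref{eq:finalmax} take the form $\overline{\rho}_{\text{rev}}^{\lambda^{(i)}_j}(\boldsymbol{x}^*)\leq s^{(i)}_{\max}$ for each predicate leaf $\lambda^{(i)}_j$ reachable from $\Phi^{(i)}_{\text{parent}}$ via a chain of $\max$ operators. Iterating $\max(\max(a,b),c)=\max(a,b,c)$ shows that $\overline{\rho}_{\text{rev}}^{\Phi^{(i)}_{\text{parent}}}(\boldsymbol{x}^*)$ equals the maximum over all such leaves, so each individual inequality $\overline{\rho}_{\text{rev}}^{\lambda^{(i)}_j}(\boldsymbol{x}^*)\leq s^{(i)}_{\max}$ holds.

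The main obstacle is bookkeeping: a single slack $s^{(i)}_{\max}$ may bound predicate leaves several levels below $\Phi^{(i)}_{\text{parent}}$, and arguments of $\overline{\min}$ in \eqref{eq:finalmin} may themselves be slacks introduced at deeper levels. The clean way to handle this is to traverse the tree from the root downward in the same order in which the decomposition introduced the slacks, which guarantees that at each step the inductive hypothesis is available for every slack appearing on the right-hand side of the current constraint. Once the parent/slack correspondence is fixed, the verification reduces entirely to the defining identities $\overline{\rho}_{\text{rev}}^{\eta}=\max(\cdot)$ at $\max$-nodes and $\overline{\rho}_{\text{rev}}^{\eta}=\overline{\min}(\cdot)$ at $\overline{\min}$-nodes, together with the monotonicity of $\overline{\min}$ already exploited in the proof of Proposition \ref{lem:robust_vs_slack}.
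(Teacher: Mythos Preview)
Your proposal is correct and follows the same route as the paper: substitute each slack with the smoothed reversed-robustness of the node it was introduced to represent, and check that every constraint of \eqref{eq:final} then holds. The paper's own proof states this in a single sentence (the inequalities \eqref{eq:finalmax}--\eqref{eq:finalmin} become identities under the substitution, so the feasible region of \eqref{eq:moto} embeds into that of \eqref{eq:final}); you supply the tree-induction that actually justifies this claim, distinguishing the $\max$-type case (where the constraint holds as an inequality because each leaf is bounded by the max) from the $\overline{\min}$-type case (where the constraint holds with equality once the inner slacks are resolved).
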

\begin{proof}
When all the inequalities in \eqref{eq:finalmax} and \eqref{eq:finalmin} becoming identity equations, the feasible region of \eqref{eq:final} becomes identical to that of \eqref{eq:moto}, i.e., $S=\operatorname{Proj}(S_{\text{DC}})$ where $S$ and $S_{\text{DC}}$ denote the feasible region of \eqref{eq:moto} and \eqref{eq:final} respectively, and the operator $\operatorname{Proj}$ denotes the projection mapping by the substitution $\boldsymbol{z}^*(\cdot)$ from $(\boldsymbol{x},\boldsymbol{u},s_\xi,\boldsymbol{s}_{\max},\boldsymbol{s}_{\min})$ to $(\boldsymbol{x},\boldsymbol{u})$. Therefore, the feasible region of \eqref{eq:moto} is included in that of \eqref{eq:final}.
\end{proof}

We now show the following optimality equivalence: 
\begin{prop}\label{prop:forward_optimal}
If a feasible solution ($\boldsymbol{x}^*,\boldsymbol{u}^*$) is optimal for \eqref{eq:moto}, then $\boldsymbol{z}^*$ in \eqref{z_star} is optimal for \eqref{eq:final}. 
\end{prop}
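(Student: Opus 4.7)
The plan is to proceed by contradiction, combining the feasibility statements already established (Propositions~\ref{lem:robust_vs_slack}, \ref{lem:rev_feasible}, and \ref{lem:forward_feasible}) with the fact that, at the point $\boldsymbol{z}^*$ constructed in \eqref{z_star}, the objective $s_\xi^*$ coincides with the robustness value $\overline{\rho}_{\text{rev}}^{\varphi}(\boldsymbol{x}^*)$ being optimized in \eqref{eq:moto}.

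Concretely, I would first invoke Proposition~\ref{lem:forward_feasible} to conclude that the candidate $\boldsymbol{z}^*$ is feasible for \eqref{eq:final}, and note that its objective value is $s_\xi^*=\overline{\rho}_{\text{rev}}^{\varphi}(\boldsymbol{x}^*)$, which equals the optimal value of \eqref{eq:moto} by assumption. Next, suppose for contradiction that $\boldsymbol{z}^*$ is not optimal for \eqref{eq:final}. Then there exists another feasible point $\boldsymbol{z}'=(\boldsymbol{x}',\boldsymbol{u}',s_\xi',\boldsymbol{s}_{\max}',\boldsymbol{s}_{\min}')$ with strictly smaller objective, i.e., $s_\xi'<s_\xi^*=\overline{\rho}_{\text{rev}}^{\varphi}(\boldsymbol{x}^*)$.

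The key step is to upgrade Proposition~\ref{lem:robust_vs_slack} from the weak conclusion $\overline{\rho}_{\text{rev}}^{\varphi}(\boldsymbol{x}')\leq 0$ to the sharper bound $\overline{\rho}_{\text{rev}}^{\varphi}(\boldsymbol{x}')\leq s_\xi'$. This is already implicit in the proof of Proposition~\ref{lem:robust_vs_slack}: the recursive monotonicity argument there propagates the inequality $\overline{\rho}_{\text{rev}}^{\Psi_i}(\boldsymbol{x}')\leq s_{\text{new}}$ upward through the tree along each $\max$/$\overline{\min}$ node, and only the very last step applies $s_\xi'\leq 0$ to obtain the nonpositivity conclusion. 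Stopping one step earlier yields $\overline{\rho}_{\text{rev}}^{\varphi}(\boldsymbol{x}')\leq s_\xi'$ for every feasible $\boldsymbol{z}'$ of \eqref{eq:final}. Chaining this with $s_\xi'<\overline{\rho}_{\text{rev}}^{\varphi}(\boldsymbol{x}^*)$ gives $\overline{\rho}_{\text{rev}}^{\varphi}(\boldsymbol{x}')<\overline{\rho}_{\text{rev}}^{\varphi}(\boldsymbol{x}^*)$.

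To close the contradiction, I invoke Proposition~\ref{lem:rev_feasible}, which ensures that $(\boldsymbol{x}',\boldsymbol{u}')$ is feasible for \eqref{eq:moto}. Thus $(\boldsymbol{x}',\boldsymbol{u}')$ is a feasible point of \eqref{eq:moto} achieving a strictly smaller cost than $(\boldsymbol{x}^*,\boldsymbol{u}^*)$, contradicting the hypothesized optimality of $(\boldsymbol{x}^*,\boldsymbol{u}^*)$ for \eqref{eq:moto}. Hence $\boldsymbol{z}^*$ is optimal for \eqref{eq:final}. The main obstacle is the strengthening of Proposition~\ref{lem:robust_vs_slack}: the argument must carefully track monotonicity through the robustness tree so that the relation $\overline{\rho}_{\text{rev}}^{\varphi}(\boldsymbol{x}')\leq s_\xi'$ is preserved under every substitution introduced by the decomposition, rather than being diluted to the weaker $\overline{\rho}_{\text{rev}}^{\varphi}(\boldsymbol{x}')\leq 0$ used previously.
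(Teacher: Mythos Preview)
Your proposal is correct and follows essentially the same contradiction argument as the paper: invoke Proposition~\ref{lem:forward_feasible} for feasibility of $\boldsymbol{z}^*$, assume a better feasible $\boldsymbol{z}'$ exists, use the monotonicity argument from the proof of Proposition~\ref{lem:robust_vs_slack} to obtain the sharper bound $\overline{\rho}_{\text{rev}}^{\varphi}(\boldsymbol{x}')\leq s_\xi'$, and then contradict optimality of $(\boldsymbol{x}^*,\boldsymbol{u}^*)$ via Proposition~\ref{lem:rev_feasible}. The paper handles your ``strengthening'' step with the single remark ``As in the proof of Proposition~\ref{lem:robust_vs_slack}, $\overline{\rho}_{\text{rev}}^{\varphi}(\boldsymbol{x}')\leq s_\xi'$,'' but your more explicit justification of why that intermediate inequality holds is accurate and matches the intended reasoning.
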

\begin{proof} 
From Proposition \ref{lem:forward_feasible}, if $\boldsymbol{z}^*$ is not the optimal solution for \eqref{eq:final}, there exists another feasible solution $(\boldsymbol{x}',\boldsymbol{u}',s_\xi', \boldsymbol{s}_{\max}',\boldsymbol{s}_{\min}')$, with a better objective value, i.e., such that $s_\xi'< s_\xi^*=\overline{\rho}_{\text{rev}}^{\varphi}(\boldsymbol{x}^*)$. As in the proof of Proposition \ref{lem:robust_vs_slack}, $\overline{\rho}_{\text{rev}}^{\varphi}(\boldsymbol{x}')\leq s_\xi'$. Therefore, $\overline{\rho}_{\text{rev}}^{\varphi}(\boldsymbol{x}')<\overline{\rho}_{\text{rev}}^{\varphi}(\boldsymbol{x}^*)$. However, as Proposition \ref{lem:rev_feasible} shows that $\boldsymbol{x}'$ is also feasible for \eqref{eq:moto}, this inequality contradicts the fact that $\boldsymbol{x}^*$ is optimal for \eqref{eq:moto}. Therefore, $\boldsymbol{z}^*$ is optimal for \eqref{eq:final}.
\end{proof}

From this proposition, the global optimum of the two programs \eqref{eq:moto} and \eqref{eq:final} is the same, that is, $\overline{\rho}_{\text{rev}}^{\varphi}(\boldsymbol{x}^*)$. Moreover, we can also prove analogously that if a feasible solution $\boldsymbol{z}'$ is optimal for \eqref{eq:final}, then $(\boldsymbol{x}',\boldsymbol{u}')$ is optimal for \eqref{eq:moto}.
\end{document}